\documentclass[10pt,a4paper]{article}

\usepackage[T1]{fontenc}
\usepackage[utf8]{inputenc}
\usepackage{mathrsfs}       
\usepackage{calligra}      
\usepackage{verbatim}      

\usepackage{amsmath, amsfonts, amssymb, amscd, amsthm}
\usepackage[all]{xy}       

\usepackage[margin=2.5cm]{geometry} 

\usepackage{graphicx}
\usepackage{tikz}
\usepackage{float}          
\usepackage[figuresright]{rotating} 

\usepackage{longtable}      
\usepackage{multirow}       
\usepackage{makecell}       
\usepackage{diagbox}        
\usepackage{booktabs}       

\usepackage[perpage, symbol*]{footmisc} 
\usepackage[justification=centering]{caption} 
\usepackage[numbers, sort&compress]{natbib}    

\usepackage[dvipsnames]{xcolor} 
\usepackage{hyperref}
\hypersetup{
    colorlinks=true,
    linkcolor=Maroon,     
    citecolor=NavyBlue, 
    urlcolor=PineGreen,   
    bookmarksnumbered=true
}

\newtheorem{lemma}{\textbf{Lemma}}
\newtheorem{definition}{\textbf{Definition}}

\newtheorem{example}{\textbf{Example}}
\newtheorem{algorithm}{\textbf{Algorithm}}

\newcommand{\F}{\mathbb{F}}

\begin{document}

\title{\bf Extension of MDS Codes over Finite Fields}

\author{Puyin Wang and Jinquan Luo
\thanks{The authros are with School of Mathematics and Statistics \& Hubei Key Laboratory of Mathematical Sciences, Central China Normal University, Wuhan China 430079.
The authors are supported by National Natural Science Foundation of China (No.12441102) and the Project of International Cooperation and Exchanges NSFC (No.12411540221)
}
\thanks{E-mails: p.wang98@qq.com(P.Wang),  luojinquan@mail.ccnu.edu.cn(J.Luo)}}
\maketitle

{\bf Abstract}:
In this paper, we investigate extending of $[n,k,n-k+1]$ non-Reed--Solomon(non-RS) MDS codes to $[n+1,k,n-k+2]$ non-RS MDS codes. We develop a practical iterative algorithm for constructing longer MDS codes while preserving their non-RS structure. The proposed algorithm is inspired by the meet-in-the-middle algorithm in cryptanalysis. On average, it requires at most $O(kq^k)$ multiplications, compared with $O\left(\binom{n}{k-1} q^{k-1} k^3\right)$ multiplications for generic exhaustive search. Applying the proposed algorithm to several known families of MDS codes, including elliptic-curve, twisted Reed--Solomon, column-twisted Reed--Solomon, cyclic, and missing-monomial MDS codes over fields of order at most $32$, we obtain new MDS codes of increased lengths that remain inequivalent to Reed--Solomon codes. In particular, we construct $[9,4,6]$ codes over $\F_9$ and $\F_{13}$, each with Schur-square dimension $9$. Over $\F_{17}$, we obtain two inequivalent $[10,4,7]$ codes with Schur-square dimensions $9$ and $10$, respectively. The latter is the dual of an extended $[10,6,5]$ code, demonstrating that the method also applies when $k>n/2$. For these dimension-four codes, the Schur-square dimensions $9$ and $10$ exceed both  that of  RS code (Schur-square dimension $7$) and those of specific elliptic-curve and column-twisted Reed--Solomon families (Schur-square dimension $8$). Some of these parameters of non-RS MDS codes have not been reported before.

{\bf Key words}:  Non-RS MDS codes; Singleton bound; codes extension; generator matrix; Schur product

\section{Introduction}

\qquad In coding theory, linear codes are very important to ensure reliable communication over noisy channels. Let $\F_q$ denote a finite field of size $q$. A linear code $C$ over $\F_q$ is simply a $k$-dimensional subspace of $\F_q^n$. We often use the parameters $[n, k, d]$ to describe it, where $d$ is the minimum distance(minimum weight) defined by
\[
d = \min\{\operatorname{wt}_H(\boldsymbol{c}) : \boldsymbol{c} \in C,\, \boldsymbol{c} \neq 0\}.
\]
Here, $\operatorname{wt}_H(\boldsymbol{c})$ is the Hamming weight, which counts the number of non-zero coordinates of $\boldsymbol{c}$.

It is well known that the parameters $n, k, d$ must satisfy the Singleton bound:
\[
d \leq n - k + 1.
\]
If a linear code reaches this equality, we call it a Maximum Distance Separable (MDS) code. Because of their optimal properties, MDS codes are widely used in practical applications, such as distributed storage systems and network coding \cite{CY}.

Due to their optimality with respect to the Singleton bound, MDS codes have been extensively studied. A classical family of MDS codes is Reed--Solomon (RS) codes, which are constructed by evaluating polynomials over finite fields \cite{RS}. However, there are also many other methods to construct MDS codes of non-RS type. Here, non-RS MDS refers to those MDS codes that are not equivalent to any RS code. These constructions can be roughly classified into the following categories:
\begin{itemize}
    \item Algebraic constructions: These include methods based on polynomial interpolation and algebraic geometry. In particular, algebraic geometry (AG) codes from curves provide a rich source of non-RS MDS codes \cite{CHHZ, WLLZ, HR}. Other algebraic approaches include cyclic constructions \cite{LCN} and some explicit families \cite{JMXZ, WLL, ZZ}.

    \item Twisted Reed--Solomon codes: These codes generalize RS codes by introducing twists in the evaluation structure. This method leads to many new non-RS MDS codes \cite{LLWZ, GZ, BPR, SYLH, SZS, ZL}.

    \item Extension and structural results: Another important direction focuses on the extension of MDS codes and the conditions to preserve the MDS property. Classical and recent results in this area include \cite{WDC}.
\end{itemize}

Constructing new MDS codes from known ones is an important topic. A direct way is to extend the code length while keeping the MDS property. However, whether a code can be extended is strictly limited by the field size and its algebraic structure. In fact, a general and easily verifiable criterion for extending MDS codes is still lacking, particularly for non-RS constructions.

To fill this gap, we provide a practical condition that allows us to systematically find longer non-RS MDS codes. Furthermore, we propose an efficient algorithm to significantly reduce the computational complexity required for such extensions. The main idea of our approach is to describe the admissible vectors as those avoiding a finite number of hyperplanes in $\F_q^k$.

The rest of this paper is organized as follows. Section~2 introduces some basic knowledge that will be used later. Section~3 presents our main results along with some examples. Finally, Section~4 concludes the paper and discusses future work.

\section{Preliminary}

\qquad In this section, we review some basic concepts and notations in coding theory. For more details, one can refer to \cite{MS}.

\begin{definition}
    Let $C$ be a $k$-dimensional linear code over $\F_q$ of length $n$. A \emph{generator matrix} of $C$ is a $k \times n$ matrix $G$ over $\F_q$ whose rows form a basis of $C$. Equivalently, $C$ can be expressed as
    \[
    C = \{ \boldsymbol{u}G \mid \boldsymbol{u} \in \F_q^k \}.
    \]
\end{definition}

Note that permutations of coordinates and multiplication on coordinate by nonzero scalars do not change the parameters of a code. Codes related by such operations are regarded as equivalent.

\begin{definition}
    Two linear codes $C$ and $C'$ of length $n$ over $\F_q$ are said to be equivalent if one can be obtained from the other by a combination of a permutation of coordinates and multiplication of coordinates by nonzero elements in $\F_q$.
\end{definition}

Let $\alpha_1,\dots,\alpha_n$ be distinct elements in $\F_q$ with $n \le q$. The $[n,k,n-k+1]$ Reed-Solomon (RS) code over $\F_q$ is defined by
\[
    \{c(f) = \big(f(\alpha_1), f(\alpha_2), \dots, f(\alpha_n)\big) \mid f(x) \in \F_q[x], \deg(f) < k \}.
\]
Moreover, RS codes are MDS codes. An MDS  code inequivalent to any RS code is called non-RS MDS code.

\begin{definition}

    Given two vectors $\boldsymbol{x}_1=(x_{1,1},\dots,x_{1,k})$ and $\boldsymbol{x}_2=(x_{2,1},\dots,x_{2,k})$ in $\F_q^k$, we say they are orthogonal if
    \[
    \langle \boldsymbol{x}_1, \boldsymbol{x}_2 \rangle = \sum_{i=1}^k x_{1,i}x_{2,i}=0.
    \]
\end{definition}

In this paper, extension of a code $C$ is the code obtained by appending one additional coordinate to each codeword of $C$, or equivalently, by adding one extra column to a generator matrix of $C$. We also refer to the resulting code as an extended code of $C$. Consequently, if the initial code $C$ is non-RS, the extended code of $C$ is also non-RS.

In order to determine whether a code is equivalent to an RS code, we need the following definition.

\begin{definition}
    Let $\boldsymbol{x}=(x_1,x_2,\dots,x_n)$ and $\boldsymbol{y}=(y_1,y_2,\dots,y_n)$ be two vectors in $\F_q^n$. The \emph{Schur product} (or componentwise product) of $\boldsymbol{x}$ and $\boldsymbol{y}$ is defined by
    \[
    \boldsymbol{x}*\boldsymbol{y} := (x_1 y_1, \dots, x_n y_n).
    \]
    For two linear codes $C_1, C_2 \subseteq \F_q^n$, their Schur product is defined as
    \[
    C_1 * C_2 := \mathrm{span}_{\F_q} ( \boldsymbol{x} * \boldsymbol{y} \mid \boldsymbol{x} \in C_1, \boldsymbol{y} \in C_2 ),
    \]
    where $\mathrm{span}_{\F_q}(S)$ denotes the $\F_q$-linear span of a set $S$.
\end{definition}

In particular, if $C_1 = C_2 = C$, we define $C^2 := C * C$ as the Schur square code of $C$. Clearly, $C_1^2$ and $C_2^2$ are equivalent when $C_1$ and $C_2$ are equivalent.

While the Schur square product code provides a convenient sufficient condition for non-RS equivalence, it may occasionally yield inconclusive results when the dimension equals $2k-1$. To address such cases and ensure the structural integrity of our extended codes, we follow the method in \cite{RL} and introduce an alternative, more refined criterion based on the Cauchy structure of the generator matrix.

\begin{lemma}[\cite{RL}]\label{Cauchy}
    Let $G = [I_k \mid A]$ be the standard generator matrix of an $[n, k]$ MDS code $C$. Then $C$ is equivalent to an RS code if and only if the matrix $A^{\prime} = [a_{i,j}^{-1}]$ (where $a_{i,j}^{-1}$ denotes the multiplicative inverse of $a_{i,j}$ in $\F_q$) satisfies the following conditions:
    \begin{itemize}
        \item All entries of $A$ are non-zero.
        \item All $2 \times 2$ minors of $A^{\prime}$ are non-zero.
        \item All $3 \times 3$ minors of $A^{\prime}$ are zero.
    \end{itemize}
\end{lemma}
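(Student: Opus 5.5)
We will prove both directions by working with the columns of $G$ as points of the projective space and using the classical description of RS codes through Cauchy matrices.

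\emph{Necessity.} Suppose $C$ is equivalent to an RS code. Equivalence by column permutation and column scaling is harmless, and row operations do not change the code. So we may assume $C$ is a generalized RS code, with a generator matrix whose columns are $v_j(1,\alpha_j,\dots,\alpha_j^{k-1})^T$, where the $\alpha_j$ are distinct and the $v_j\neq 0$; the value $\alpha=\infty$ is allowed and corresponds to the column $e_k$. The systematic form $[I_k\mid A]$ is obtained by multiplying on the left by the inverse of the first $k\times k$ block, which is a Vandermonde block. Lagrange interpolation then gives the explicit form
\[
a_{i,j}=\frac{c_i d_j}{\beta_j-\alpha_i},
\]
where $\alpha_1,\dots,\alpha_k$ are the nodes of the information set, $\beta_j$ are the remaining nodes, and $c_i,d_j$ are nonzero constants. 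This is a scaled Cauchy matrix; if $\infty$ occurs, we first apply a Möbius change of variable so that all nodes are finite. It follows that every entry is nonzero, and that
\[
a'_{i,j}=c_i^{-1}d_j^{-1}(\beta_j-\alpha_i).
\]
Hence $A'=D_1 M D_2$ with $D_1,D_2$ invertible diagonal and $M_{ij}=\beta_j-\alpha_i$. Since $M$ is a sum of two rank-one matrices, it has rank at most $2$, so every $3\times3$ minor vanishes. A $2\times2$ minor of $M$ equals $-(\alpha_i-\alpha_{i'})(\beta_j-\beta_{j'})$, which is nonzero because the nodes are distinct. So the $2\times 2$ minors of $A'$ are nonzero.

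\emph{Sufficiency.} Conversely, assume the three conditions hold. Since $A'$ has rank at most $2$ and all its $2\times2$ minors are nonzero, we can write $a'_{i,j}=x_iy_j+u_iw_j$. The main step is to rescale rows and columns so that this takes the form $c_i^{-1}d_j^{-1}(\beta_j-\alpha_i)$. There are two cases.
\begin{itemize}
\item If some column-type vector has no zero entries, normalize it to the all-ones vector. This gives $a'_{i,j}=s_i t_j(\beta_j-\alpha_i)$, possibly after one further Möbius/affine adjustment.
\item Otherwise, show that a degenerate case such as $x_i=0$ forces a vanishing $2\times2$ minor or a zero entry, contradicting the hypotheses.
\end{itemize}
The nonvanishing $2\times2$ minors then force the $\alpha_i$ to be pairwise distinct and the $\beta_j$ to be pairwise distinct. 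The nonzero entries force $\alpha_i\neq\beta_j$. Together these show that $[I_k\mid A]$ coincides, up to row and column scaling, with the systematic generator matrix of the GRS code on the nodes $\{\alpha_i\}\cup\{\beta_j\}$. Such a code is equivalent to an RS code.

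\emph{Main obstacle.} The delicate point is the normalization in the sufficiency direction. Two things need care: a rank-$2$ matrix may fail to admit the form $\beta_j-\alpha_i$ directly, and the point at infinity must be handled. The first thing to try is to exploit the $\mathrm{PGL}_2$ freedom on $(x_i:u_i)$ and $(y_j:w_j)$, viewed as projective points. Then $a'_{i,j}$ is the bilinear pairing $\det[(x_i,u_i),(w_j,-y_j)]$ up to scalars, and the distinctness conditions become exactly the distinctness of these projective points. This gives a clean, coordinate-free proof that avoids case splits. A further subtlety is that when $n-k<2$ or $k<3$ the minor conditions are vacuous, and these small cases should be checked separately.
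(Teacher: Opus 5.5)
The paper gives no proof of this lemma; it is quoted from Roth--Lempel, and your route is the standard Cauchy-matrix one. Your necessity half is fine. (The $2\times2$ minor of $(\beta_j-\alpha_i)$ is actually $+(\alpha_i-\alpha_{i'})(\beta_j-\beta_{j'})$, but the sign is irrelevant.)

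\textbf{The gap in sufficiency.} It sits exactly at the normalization you flag as the main obstacle, and your second bullet is false as stated: a single $x_i=0$ contradicts nothing. For example, $a'_{ij}=\beta_j-\alpha_i$ with distinct nodes and $\alpha_1=0$ satisfies all hypotheses. It can be written with $x_i=-\alpha_i$, $y_j=1$, $u_i=1$, $w_j=\beta_j$, so $x_1=0$. A zero $x_i$ (or $w_j$) only says that one point lies at infinity in the chosen coordinates. A contradiction arises only when two indices do. Likewise, the ``further M\"obius adjustment'' in your first bullet (needed when some $w_j=0$) is exactly the step that has to be carried out.

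\textbf{How to close it.} The fix is the observation you only assert at the end. Write $A'=XY$ with $X\in\F_q^{k\times2}$ and $Y\in\F_q^{2\times(n-k)}$. Then $a'_{ij}=\det[P_i,Q_j]$ for suitable row vectors $P_i,Q_j\in\F_q^2$, where $[P,Q]$ denotes the $2\times2$ matrix with rows $P,Q$. This gives
\[
\det\begin{pmatrix}a'_{ij}&a'_{ij'}\\ a'_{i'j}&a'_{i'j'}\end{pmatrix}=\det[P_i,P_{i'}]\,\det[Q_j,Q_{j'}].
\]
So for $k\ge2$ and $n-k\ge2$, the hypotheses say precisely that the $n$ points $[P_i],[Q_j]\in\mathrm{PG}(1,q)$ are pairwise distinct. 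Replacing every $P_i,Q_j$ by $P_ig,Q_jg$ with $g\in\mathrm{GL}_2(\F_q)$ only rescales $A'$ by $\det g$. So you may choose $g$ such that $(0:1)$ is not among the points. Then $P_i=\lambda_i(1,\alpha_i)$, $Q_j=\mu_j(1,\beta_j)$, and $a'_{ij}=\lambda_i\mu_j(\beta_j-\alpha_i)$ up to the factor $\det g$. Your Lagrange formula then gives $A=D_1A_{\mathrm{GRS}}D_2$ with $D_1,D_2$ invertible diagonal. This is a monomial equivalence with the GRS code on the nodes $\alpha_i,\beta_j$.

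\textbf{The hypothesis $n\le q$.} The choice of $g$ needs a point of $\mathrm{PG}(1,q)$ not used by the $n$ points, i.e.\ $n\le q$, and this is not a technicality.
\begin{itemize}
\item For $n=q+1$, the doubly extended RS code satisfies all three conditions but is not equivalent to an RS code as defined in the paper, since those have length at most $q$.
\item For $k=1$ or $n-k=1$, the minor conditions are vacuous. For instance, the $[n,n-1,2]$ parity-check code with $n>q+1$ satisfies them without being GRS.
\end{itemize}
So your ``check the small cases separately'' cannot be completed as a proof. It has to end in the hypothesis $n\le q$, which is implicit in the lemma (or else doubly extended codes must count as RS). Under that hypothesis the small cases are immediate: every $[n,1,n]$ code is GRS on any $n$ distinct nodes, and duals of GRS codes are GRS.

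Finally, for an MDS code the first two conditions hold automatically. A $2\times2$ minor of $A'$ equals minus the corresponding minor of $A$ divided by the product of its four entries, and all entries and $2\times 2$ minors of $A$ are nonzero. So the only real hypothesis is that $A'$ has rank at most $2$.
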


\section{Meet-in-the-Middle Algorithm}

The name ``meet-in-the-middle'' is motivated by the similarity between our algorithm and the classical meet-in-the-middle technique in cryptography. Instead of performing a direct exhaustive search over all possible extensions, we split the verification process into two interacting parts: the geometric constraints arising from minimum weight codewords and the candidate extension vectors. The algorithm then detects valid extensions through inner product tests between these two sets, significantly reducing the search complexity.

Let $G_0 = (\boldsymbol{g}_1,\dots,\boldsymbol{g}_n)$ be a generator matrix of an $[n,k,n-k+1]$ MDS code $C_0$, where $\boldsymbol{g}_i$ denotes its $i$-th column. Our goal is to construct an extended $[n+1,k,n-k+2]$ MDS code $C_1$ by appending a single column $\boldsymbol{\beta}$ to $G_0$. Specifically, we seek a vector $\boldsymbol{\beta} \in \F_q^k$ such that the augmented matrix
\[
G_1 = (G_0 \mid \boldsymbol{\beta})
\]
generates the desired MDS code $C_1$.

For any index set $1 \le i_1 < \cdots < i_s \le n$, let $G_{i_1,\dots,i_s}$ denote the submatrix of $G_0$ consisting of these selected columns:
\[
G_{i_1,\dots,i_s} = (\boldsymbol{g}_{i_1},\dots,\boldsymbol{g}_{i_s}).
\]

\begin{lemma}\label{MainLemma}
    For an MDS $[n,k,d]$ code $C_0$ with generator matrix $G_0$, the code $C_1$ generated by $G_1$ is MDS if and only if for any $1 \le i_1 < \cdots < i_{k-1} \le n$, the matrix
    \[
    (G_{i_1,\dots,i_{k-1}} \mid \boldsymbol{\beta})
    \]
    is invertible. Moreover, if $C_0$ is not equivalent to any Reed--Solomon (RS) code, then neither is $C_1$.
\end{lemma}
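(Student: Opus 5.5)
We will use the standard characterization of MDS codes: a $k\times n$ matrix of rank $k$ generates an MDS code if and only if every $k$ columns of it are linearly independent. This is equivalent to saying every $k\times k$ submatrix is invertible. The justification is short. A nonzero codeword $\boldsymbol{u}G$ has weight $\le n-k$ exactly when it vanishes on some $k$ coordinates. That happens exactly when $\boldsymbol{u}$ is orthogonal to $k$ columns of $G$. Such a nonzero $\boldsymbol{u}$ exists if and only if those $k$ columns are dependent. We record this equivalence first and apply it to $G_1=(G_0\mid\boldsymbol{\beta})$, which has $n+1$ columns.

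The $k\times k$ submatrices of $G_1$ split into two types. The first type avoids the last column; these are exactly the $k\times k$ submatrices of $G_0$, and they are all invertible because $C_0$ is MDS. The second type contains $\boldsymbol{\beta}$; each is, up to column order, $(G_{i_1,\dots,i_{k-1}}\mid\boldsymbol{\beta})$ for some $1\le i_1<\cdots<i_{k-1}\le n$. Reordering columns changes a determinant only by a sign, so invertibility is unaffected. Hence $C_1$ is MDS, with parameters $[n+1,k,n-k+2]$, if and only if every matrix $(G_{i_1,\dots,i_{k-1}}\mid\boldsymbol{\beta})$ is invertible. The rank of $G_1$ is $k$ because $G_0$ already has rank $k$, so the dimension remains $k$. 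This proves the first assertion.

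For the second assertion we argue by contraposition. Suppose $C_1$ is equivalent to an RS code $\{(f(\alpha_1),\dots,f(\alpha_{n+1})):\deg f<k\}$ via a permutation $\sigma$ and nonzero scalars. Puncture $C_1$ at its last coordinate; this gives back $C_0$, since deleting the column $\boldsymbol{\beta}$ from $G_1$ gives $G_0$. Under the equivalence, the last coordinate of $C_1$ corresponds to some position $j$ of the RS code. Puncturing the RS code at position $j$ gives $\{(f(\alpha_i))_{i\ne j}:\deg f<k\}$, which is again an RS code of length $n$ with evaluation points $\{\alpha_i\}_{i\ne j}$. Its dimension is still $k$ because $n\ge k$ and a nonzero polynomial of degree $<k$ cannot vanish at $n$ distinct points. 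The restricted permutation and scalars give an equivalence between $C_0$ and this punctured RS code, so $C_0$ is RS-equivalent, a contradiction.

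The main point needing care is this puncturing step. We have to check that the monomial equivalence really commutes with deleting the corresponding coordinate, and that "RS code" is taken with $n\le q$ distinct evaluation points, so the punctured code falls under the paper's definition. If the paper intends generalized RS codes, the column multipliers are absorbed by the scalars of the equivalence, so the argument is unchanged.
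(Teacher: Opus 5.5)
Your proof is correct and follows the same route as the paper. Since $C_0$ is MDS, the $k$-column subsets avoiding $\boldsymbol{\beta}$ are already independent, so only those containing $\boldsymbol{\beta}$ matter; the non-RS claim follows because $C_0$ is a punctured code of $C_1$ and a punctured RS code is again RS. You simply supply details the paper leaves implicit: the column-independence characterization of MDS codes, and the check that monomial equivalence commutes with puncturing.
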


\begin{proof}
    Since $C_0$ is an MDS code, any $k$ columns of $G_0$ are linearly independent. Therefore, $C_1$ is an MDS code if and only if any $k-1$ columns of $G_0$ together with the new column $\boldsymbol{\beta}$ remain linearly independent. This is equivalent to saying that the matrix $(G_{i_1,\dots,i_{k-1}} \mid \boldsymbol{\beta})$ is invertible for all  $1\leq i_1\leq \cdots\leq i_k\leq n$.

    Assume $C_1$ is equivalent to some RS code. Then $C_0$, as a punctured code of $C_1$, is also equivalent to an RS code, which yields a contradiction.
\end{proof}

Now, we formalize the procedure for finding such an extension vector $\boldsymbol{\beta}$ to construct the extended code $C_1$.

\begin{algorithm}[Meet-in-the-Middle Algorithm]
    \qquad
    \begin{enumerate}
        \item [\textbf{Step 1:}] Find all standard nonzero vectors $\boldsymbol{v} \in \F_q^k$ (here ``standard'' means the first nonzero entry is $1$) such that $\boldsymbol{v}$ is orthogonal to $k-1$ columns of $G_0$.
        \item [\textbf{Step 2:}] Put all such vectors $\boldsymbol{v}$ from Step 1 into a set $M$. Find a column vector $\boldsymbol{\beta}$ such that $\boldsymbol{\beta}$ is not orthogonal to any vector in $M$.
        \item [\textbf{Step 3:}] Construct the extended code $C_1$ generated by
        \[
        G_1 = (G_0 \mid \boldsymbol{\beta}).
        \]
    \end{enumerate}
\end{algorithm}

In Step 2, $\boldsymbol{\beta}$ is not in any hyperplane spanned by any $k-1$ columns of $G_0$ since $\boldsymbol{\beta}$ is not orthogonal to the normal vectors of these hyperplanes. Therefore, by Lemma \ref{MainLemma}, $G_1 = (G_0 \mid \boldsymbol{\beta})$ indeed generates an $[n+1,k,n-k+2]$ MDS code.

To demonstrate the computational advantage of our proposed algorithm, we estimate and compare the complexity in terms of the number of multiplications in $\F_q$.

\textbf{Algorithm 1:} We calculate the computational cost according to the steps.
\begin{itemize}
    \item \textit{Cost of Step 1:} Finding the normal vectors $\boldsymbol{v}$ requires solving $\binom{n}{k-1}$ linear systems of equations. Using Gaussian elimination on a $(k-1) \times k$ matrix takes approximately $k^3/3$ multiplications. Thus, Step 1 has about $\binom{n}{k-1} \frac{k^3}{3}$ multiplications.

    \item \textit{Cost of Step 2:} For a given candidate $\boldsymbol{\beta}\in\F_q^k$, we test the inner products $\langle \boldsymbol{v}, \boldsymbol{\beta} \rangle$ sequentially for $\boldsymbol{v} \in M$. In the worst case, this requires $k|M|$ multiplications. However, the test stops immediately once $\langle \boldsymbol{v}, \boldsymbol{\beta} \rangle = 0$ for some $\boldsymbol{v} \in M$. For a fixed $\boldsymbol{v} \in M$, the probability of $\langle \boldsymbol{v}, \boldsymbol{\beta} \rangle = 0$ is $1/q$. Hence, the expected number of tested vectors in $M$ for a single $\boldsymbol{\beta}$ is
    \[
    \sum_{i=1}^{|M|}\left(1-\frac{1}{q}\right)^{i-1} = q\left(1-\left(1-\frac{1}{q}\right)^{|M|}\right) \leq  q.
    \]
    By MacWilliams identities \cite{MS}, an $[n, k, n-k+1]$ MDS code has $A_{n-k+1}=(q-1)\binom{n}{n-k+1}$ minimum weight codewords, which implies
    \[
    |M| = \binom{n}{k-1}.
    \]
    Given $q^{k-1}$ possible choices of $\boldsymbol{\beta}$, the total expected number of multiplications for Step 2 is bounded by
    \[
    k \cdot q^{k-1} \cdot q\left(1 - \left(1 - \frac{1}{q}\right)^{\binom{n}{k-1}}\right) =O(kq^k).
    \]
    \item \textit{Cost of Step 3:} No multiplications.
\end{itemize}

Therefore, the total expected cost of our algorithm is approx:
\begin{equation}\label{kqk}
\binom{n}{k-1} \frac{k^3}{3} + O(k q^k).
\end{equation}

\textbf{Generic search:} We also consider standard vectors. For each candidate $\boldsymbol{\beta} \in \F_q^k$ and each choice of indices $1 \le i_1 < \cdots < i_{k-1} \le n$, one needs to verify that
\[
\boldsymbol{\beta} \notin \mathrm{span}_{\F_q}(\boldsymbol{g}_{i_1},\ldots,\boldsymbol{g}_{i_{k-1}}).
\]
This can be checked by computing the determinant(or rank) $\det(\boldsymbol{g}_{i_1},\ldots,\boldsymbol{g}_{i_{k-1}},\boldsymbol{\beta})$. Using Gaussian elimination, this requires approximately $k^3/3$ multiplications. Hence, the total computational cost of the naive method is at most
\begin{equation}\label{k3qk}
    \binom{n}{k-1} q^{k-1} \left(\frac{k^3}{3} + O(k^2)\right) = O\left(\binom{n}{k-1} q^{k-1} k^3\right).
\end{equation}

\textbf{Comparison:} When $n<q$, $\binom{n}{k-1}\frac{k^3}{3}$ from Step 1 is negligible. Therefore, the expected cost of our algorithm is approximately $O(k q^k)$. In contrast, the direct exhaustive search requires $O\left(\binom{n}{k-1} q^{k-1} k^3\right)$. In contrast to the exhaustive search, our approach offers a significant computational speedup.

To show our algorithm, we provide some example over finite fields. Following \cite{WLLZ}, we begin with an $[n,k]$ MDS code $C_0$ derived from an elliptic curve with the maximum number of rational points. By applying the extension procedure (Steps 1, 2, and 3), we can identify suitable extension vectors $\boldsymbol{\beta}_i$. The resulting extended matrix generates an MDS code that is strictly not equivalent to any RS code. We provide an example over $\F_8$ below.

\begin{example}[\cite{WLLZ}, Theorem 1]
    Consider $\F_8=\F_2(w)$, where $w$ satisfies $w^3+w+1=0$. Let $y^2+wxy+wy=x^3+w+1$ be an elliptic curve over $\F_8$.

    Let $D=(0:1:0)+(w^4:1:1)+(0:w^3:1)+(w^3:w^4:1)+(w^3:w:1)+(0:1:1)+(w^4:w^2:1)$ and $G=3P_0(w:w^4:1)$ be two divisors over $\F_8$. The algebraic geometry code $C=C_\mathscr{L}(D,G)$ is then a $[7,3,5]$ MDS code over $\F_8$ with generator matrix
    \[
    G_0=
    \begin{pmatrix}
        1 & 1 & 1 & 1 & 1 & 1 & 1 \\
        0 & w^4 & w^6 & w^3 & 1 & 1 & w^6 \\
        w^2 & w & 0 & w^6 & w^3 & 0 & w^3
    \end{pmatrix}.
    \]

    Transforming $G_0$ into its standard form, we get
    \[
    G_{std} =
    \begin{pmatrix}
        1 & 0 & 0 & w^6 & 1 & w^6 & w^2 \\
        0 & 1 & 0 & w^4 & w^4 & 1 & w^5 \\
        0 & 0 & 1 & w & w^4 & w^6 & w
    \end{pmatrix}.
    \]
    The corresponding matrix $A^{\prime}$ from the standard form is given by
    \[
    A^{\prime} =
    \begin{pmatrix}
        w & w^7 & w & w^5 \\
        w^3 & w^3 & w^7 & w^2 \\
        w^6 & w^3 & w & w^6
    \end{pmatrix}.
    \]
    It is easy to verify that the $3 \times 3$ minor formed by columns 1, 2, and 4 is non-zero:
    \[
    \det
    \begin{pmatrix}
        w & 1 & w^5 \\
        w^3 & w^3 & w^2 \\
        w^6 & w^3 & w^6
    \end{pmatrix}
    = w^5 \neq 0.
    \]
    So $C$ is not equivalent to any RS code by Lemma~\ref{Cauchy}.

    Applying our extension algorithm, we successively append
    \[
    \boldsymbol{\beta}_1=(1,0,w^6)^T, \quad \boldsymbol{\beta}_2=(1,w^4,w^2)^T, \quad \boldsymbol{\beta}_3=(1,w^3,w)^T.
    \]
    The final generator matrix $G_3=(G_0\mid \boldsymbol{\beta}_1,\boldsymbol{\beta}_2,\boldsymbol{\beta}_3)$ generates a $[10,3,8]$ MDS code over $\F_8$ and $dim(C^2)=6$. This attains the maximal possible length for MDS codes over $\F_8$. In particular, the resulting code is non-RS.

    The code $C_0$ comes from an elliptic curve and attains the upper bound of length proposed in \cite{HR}, our extension algorithm further enlarges the code and finally reaches the MDS bound $q+2$ over $\F_8$. Although this [10,3,8] code is not new, this example still shows that our algorithm can find MDS codes reaching the theoretical maximum length.
\end{example}

Moreover, we give an example over $\F_9$:

\begin{example}
    Consider $\F_9=\F_3(w)$, where $w$ satisfies $w^2+2w+2=0$. Let $y^2 = x^3 + 2x$ be an elliptic curve over $\F_9$.

    Let $D=(w^2 : w : 1)+(w^6 : w^3 : 1)+(0 : 0 : 1)+(w^6 : w^7 : 1)+(1 : 0 : 1)+(w^2 : w^5 : 1)+(2 : 0 : 1)$ and $G=3(0 : 1 : 0) + (w^5 : w^7 : 1)$ be two divisors over $\F_9$. The algebraic geometry code $C=C_\mathscr{L}(D,G)$ is then a $[7,4,4]$ MDS code over $\F_9$ with a generator matrix
    \[
    G_0=
    \begin{pmatrix}
        w^6 & w^5 & w^6 & w^6 & 1 & w^7 & w^5 \\
        w^7 & 0 & 2 & w^5 & w^6 & w & w^3 \\
        1 & w^5 & w^7 & w^5 & w & 1 & w^2 \\
        1 & 1 & 1 & 1 & 1 & 1 & 1
    \end{pmatrix}.
    \]

    The standard form of $G_0$ is $[I_4 \mid A]$, where the submatrix $A$ is given by
    \[
    A =
    \begin{pmatrix}
        w^7 & w^1 & w^6 \\
        w^7 & w^5 & w^0 \\
        w^3 & w^7 & w^7 \\
        w^6 & w^6 & w^4
    \end{pmatrix}.
    \]
    Taking the multiplicative inverse of each entry, we have
    \[
    A^{\prime} =
    \begin{pmatrix}
        w^1 & w^7 & w^2 \\
        w^1 & w^3 & w^0 \\
        w^5 & w^1 & w^1 \\
        w^2 & w^2 & w^4
    \end{pmatrix}.
    \]
    We have
    \[
    \det
    \begin{pmatrix}
        w^1 & w^7 & w^2 \\
        w^1 & w^3 & w^0 \\
        w^5 & w^1 & w^1
    \end{pmatrix}
    = w^6 \neq 0.
    \]
    Thus, $C$ is not equivalent to any RS code.

    Applying our extension algorithm, we successively append the following column vectors:
    \[
    \boldsymbol{\beta}_1=(0, 1, w^2, w^6)^T, \quad \boldsymbol{\beta}_2= (1, 0, w^2, w^5)^T.
    \]
    The final generator matrix $G_2=(G_0\mid \boldsymbol{\beta}_1,\boldsymbol{\beta}_2)$ yields a $[9,4,6]$ MDS code over $\F_9$ and $dim(C^2)=9$, which remains non-RS.
\end{example}

It is worth noting that any $[10,4,7]$ MDS code over $\F_9$ corresponds to a $10$-arc in $PG(3,9)$. Classical results on arcs in finite projective spaces (see \cite{H}) indicate that large complete arcs in $PG(3,9)$ are highly constrained and closely related to normal rational curves. Since RS codes are stable under puncturing, any extension of our non-RS $[7,4,4]$ code to length $10$ would require a non-RS $10$-arc in $PG(3,9)$, which is not supported by known classifications. Hence, length $n=9$ appears to be the maximal achievable value for non-RS MDS codes of dimension $4$ over $\F_9$, and our algorithm terminates exactly at this boundary.

In short, the examples over $\F_8$ and $\F_9$ show that the algorithm can go beyond the usual maximal lengths arising from elliptic-curve constructions and still reach the MDS bound. In particular, the non-RS structure is strictly preserved.

To test our method in different finite fields, we start with non-RS MDS codes over $\F_{11}$ from Example 3.7 in \cite{GZ}.  To make our discussion self-contained, we briefly introduce the TGRS codes with $l$ twists proposed by Gu and Zhang. Let $n$, $k$, and $l$ be the code length, dimension, and the number of twists, respectively. A twisted polynomial space $\mathcal{S}$ is generated by perturbing $l$ basis elements of the standard space. Specifically, for a twist parameter vector $\boldsymbol{\eta} = (\eta_{1}, \dots, \eta_{l}) \in \F_{q}^{l}$ and some integers $h_j \ge k$, the basis of $\mathcal{S}$ is obtained by replacing $x^{t_j}$ with $x^{t_j} + \eta_j x^{h_j}$ for $j=1, \dots, l$.

By evaluating the functions of $\mathcal{S}$ at distinct points $\boldsymbol{\alpha} = (\alpha_{1}, \dots, \alpha_{n}) \in \F_{q}^{n}$, we obtain the twisted GRS (TGRS) code $\mathcal{C} = \mathrm{ev}_{\boldsymbol{\alpha},1}(\mathcal{S})$. For any $k$-subset $I \subseteq \{1, 2, \dots, n\}$, let $M(\boldsymbol{\eta}, \boldsymbol{\alpha}, I, l)$ denote a specific determinant associated with the evaluation points and twist parameters over $I$. The fundamental result by Gu and Zhang states that $\mathcal{C}$ is MDS if and only if $M(\boldsymbol{\eta}, \boldsymbol{\alpha}, I, l) \neq 0$ for all possible subsets $I$, which is reviewed in the following result.

\begin{lemma} [\cite{GZ}, Theorem 3.3]
    Suppose that $3 \le k < n$ and $\alpha_{1}, \alpha_{2}, \dots, \alpha_{n}$ are distinct elements of $\F_{q}$. Let $\Omega$ be the set of all $\boldsymbol{\eta} \in \F_{q}^{l}$ such that for each $k$-subset $I$, we have $M(\boldsymbol{\eta}, \boldsymbol{\alpha}, I, l) \neq 0$. Then $\mathcal{C} = \mathrm{ev}_{\boldsymbol{\alpha},1}(\mathcal{S})$ is MDS if and only if $\boldsymbol{\eta} = (\eta_{1}, \dots, \eta_{l}) \in \Omega$.
\end{lemma}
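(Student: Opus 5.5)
The plan is to reduce the statement to the standard characterization of MDS codes by their generator matrices: a linear $[n,k]$ code with generator matrix $G$ is MDS if and only if every $k\times k$ submatrix of $G$ (every choice of $k$ columns) is nonsingular. For $\mathcal{C}=\mathrm{ev}_{\boldsymbol{\alpha},1}(\mathcal{S})$ we will use the generator matrix $G_{\boldsymbol{\eta}}$ whose rows are the evaluation vectors of the $k$ basis functions of $\mathcal{S}$, namely $x^{i}$ for the untwisted indices and $x^{t_j}+\eta_j x^{h_j}$ for $j=1,\dots,l$, at $\alpha_1,\dots,\alpha_n$. Before anything else we must check that these $k$ rows are linearly independent, so that $\dim\mathcal{C}=k$. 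This follows since $k<n$ and $h_j\ge k$ with the $h_j$ chosen as in the construction; if the rows were dependent, some nonzero element of $\mathcal{S}$ (a polynomial of degree less than $n$, given the usual restriction $h_j\le n-1$ in \cite{GZ}) would vanish at $n$ distinct points, which is impossible.

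Next, fix a $k$-subset $I\subseteq\{1,\dots,n\}$. The determinant of the $k\times k$ submatrix $G_{\boldsymbol{\eta}}|_I$ is exactly the quantity $M(\boldsymbol{\eta},\boldsymbol{\alpha},I,l)$; by definition in \cite{GZ}, it is this determinant, possibly normalized by dividing out the nonzero Vandermonde factor $\prod_{i<j\in I}(\alpha_j-\alpha_i)$. Either way, $M(\boldsymbol{\eta},\boldsymbol{\alpha},I,l)\ne0$ if and only if the $k$ columns indexed by $I$ are linearly independent. To make this concrete, expand the determinant multilinearly in the twisted rows. Each row $x^{t_j}+\eta_jx^{h_j}$ splits into two rows, so the determinant becomes a sum, over subsets $J\subseteq\{1,\dots,l\}$, of $\prod_{j\in J}\eta_j$ times a generalized Vandermonde determinant. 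Dividing by the ordinary Vandermonde determinant turns each term into a Schur polynomial in $\{\alpha_i\}_{i\in I}$, which recovers the explicit form of $M$.

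Finally, $\mathcal{C}$ is MDS if and only if every set of $k$ columns of $G_{\boldsymbol{\eta}}$ is independent. By the previous step this holds if and only if $M(\boldsymbol{\eta},\boldsymbol{\alpha},I,l)\neq0$ for all $k$-subsets $I$, that is, if and only if $\boldsymbol{\eta}\in\Omega$. This proves both directions at once.

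The main obstacle is the identification in the second step: matching the paper's unnamed ``specific determinant'' $M$ with $\det G_{\boldsymbol{\eta}}|_I$ up to a nonzero factor. I would first try to verify this directly from the definition in \cite{GZ}, via the multilinear expansion into Schur polynomials. The only facts the argument really needs are that the normalizing factor is nonzero, which holds because the $\alpha_i$ are distinct, and that $\dim\mathcal{C}=k$.
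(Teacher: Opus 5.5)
The paper gives no proof of this lemma; it is quoted from \cite{GZ}, so your proposal has to stand on its own. Your frame is right: a full-rank $k\times n$ generator matrix gives an MDS code if and only if all its $k\times k$ minors are nonzero. This criterion also makes your separate dimension check unnecessary, since nonvanishing minors force rank $k$ and an $[n,k,n-k+1]$ code forces $G_{\boldsymbol{\eta}}$ to have full rank. But the argument stops at exactly the step you flag.

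Declaring that $M(\boldsymbol{\eta},\boldsymbol{\alpha},I,l)$ is ``by definition'' the minor $\det G_{\boldsymbol{\eta}}|_I$, possibly divided by the Vandermonde factor, assumes what has to be shown. With that definition the cited theorem would merely restate the MDS criterion. In \cite{GZ}, $M$ is an explicit expression in the $\eta_j$ and the $\alpha_i$, $i\in I$, not the $k\times k$ minor itself. The substance of the theorem is the identity
\[
\det G_{\boldsymbol{\eta}}|_I=\prod_{i<j,\; i,j\in I}(\alpha_j-\alpha_i)\cdot M(\boldsymbol{\eta},\boldsymbol{\alpha},I,l).
\]
Your multilinear expansion only yields a signed sum, over $J\subseteq\{1,\dots,l\}$, of $\prod_{j\in J}\eta_j$ times Schur polynomials in $\{\alpha_i\}_{i\in I}$. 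You never show that this $2^l$-term sum equals the determinant that defines $M$, and that requires more than the expansion.

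One clean way to close the gap is as follows. Put $g_I(x)=\prod_{i\in I}(x-\alpha_i)$. Evaluation at the points of $I$ only sees residues modulo $g_I$, so $G_{\boldsymbol{\eta}}|_I=P_IV_I$. Here $V_I=(\alpha_i^m)_{0\le m<k,\ i\in I}$, and the rows of $P_I$ are the coefficient vectors, in $1,x,\dots,x^{k-1}$, of the basis functions of $\mathcal{S}$ reduced modulo $g_I$. The untwisted monomials and the $x^{t_j}$ have degree $<k$ and are their own residues. Hence $P_I=\mathrm{Id}_k+UDW$, where $U=(e_{t_1},\dots,e_{t_l})$, $D=\mathrm{diag}(\eta_1,\dots,\eta_l)$, and $W$ is the $l\times k$ matrix whose $j$-th row is the coefficient vector of $x^{h_j}\bmod g_I$. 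Sylvester's identity $\det(\mathrm{Id}_k+UDW)=\det(\mathrm{Id}_l+DWU)$ then gives
\[
\det G_{\boldsymbol{\eta}}|_I=\det V_I\cdot\det\bigl(\mathrm{Id}_l+DWU\bigr).
\]
The entry $(WU)_{j,j'}$ is the coefficient of $x^{t_{j'}}$ in $x^{h_j}\bmod g_I$, a polynomial in the elementary symmetric functions of $\{\alpha_i\}_{i\in I}$. For $l=1$, $t_1=k-1$, $h_1=k$ this is the familiar $1+\eta_1\sum_{i\in I}\alpha_i$. Expanding this $l\times l$ determinant by principal minors reproduces your Schur-polynomial sum term by term. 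To finish, you must match this explicit $l\times l$ form against the definition of $M$ in \cite{GZ}.
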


\begin{example} [\cite{GZ}, Example 3.7]
    Let $\boldsymbol{\alpha}=(1, 2, 3, 5, 6, 8, 9, 10)$ and $\boldsymbol{\eta} = (\eta_1, \eta_2) = (2, 9)$. Then $\mathcal{S} = \{1, x + 2x^3, x^2 + 9x^4\}$ and $C$ is an $[8,3,6]$ MDS code over $\F_{11}$ generated by
    \[
    G_0 = \begin{pmatrix}
        1 & 1 & 1 & 1 & 1 & 1 & 1 & 1 \\
        3 & 7 & 2 & 2 & 9 & 9 & 4 & 8 \\
        10 & 5 & 1 & 7 & 7 & 1 & 5 & 10
    \end{pmatrix}.
    \]

    We first show that $C$ is not equivalent to any RS code. A direct computation shows that the Schur square $C^2$ has  $\dim(C^2)=6$. Hence, $C$ cannot be equivalent to any RS code.

    By applying our extension algorithm, we successively append the vector
    \[
    \boldsymbol{\beta}_1=(1,0,2)^T.
    \]
    $G_1=(G_0\mid \boldsymbol{\beta}_1)$ generates a $[9,3,7]$ non-RS MDS code and $dim(C^2)=6$.
\end{example}

\begin{example} [\cite{GZ}, Example 3.7]
    Let $\boldsymbol{\alpha} = (1, 2, 3, 5, 6, 8, 9, 10)$ and $\boldsymbol{\eta} = (\eta_1, \eta_2) = (4, 4)$. Then the twisted polynomial space is $\mathcal{S} = \{1, x, x^2 + 4x^4, x^3 + 4x^5\}$. This yields an $[8,4,5]$ MDS code $C$ over $\F_{11}$ generated by
    \[
    G_0 = \begin{pmatrix}
        1 & 1 & 1 & 1 & 1 & 1 & 1 & 1 \\
        1 & 2 & 3 & 5 & 6 & 8 & 9 & 10 \\
        5 & 2 & 3 & 6 & 6 & 3 & 2 & 5 \\
        5 & 4 & 9 & 8 & 3 & 2 & 7 & 6
    \end{pmatrix}.
    \]

    For this code, the Schur product criterion is inconclusive since $\dim(C^2) = 7 = 2k-1$. Therefore, we apply Lemma \ref{Cauchy}. The standard form of $G_0$ is $[I_4 \mid A]$, where $A$ is given by
    \[
    A = \begin{pmatrix}
        2 & 2 & 2 & 1 \\
        10 & 9 & 4 & 6 \\
        8 & 3 & 10 & 3 \\
        3 & 9 & 7 & 2
    \end{pmatrix}.
    \]
    Consequently, the corresponding inverse matrix $A^{\prime}$ is
    \[
    A^{\prime} = \begin{pmatrix}
        6 & 6 & 6 & 1 \\
        10 & 5 & 3 & 2 \\
        7 & 4 & 10 & 4 \\
        4 & 5 & 8 & 6
    \end{pmatrix}.
    \]
    We have $\operatorname{rank}(A^{\prime}) = 4$. Since the rank is strictly greater than 2, it guarantees that not all $3 \times 3$ minors of $A^{\prime}$ are zero. Thus, by Lemma~\ref{Cauchy}, $C$ is not equivalent to any RS code.

    By applying our extension algorithm, we successively append the vector
    \[
    \boldsymbol{\beta}_1=(1,0,8,0)^T.
    \]
    $G_1=(G_0\mid \boldsymbol{\beta}_1)$ generates a $[9,4,6]$ non-RS MDS code and $dim(C^2)=8$.
\end{example}

In these two examples, we apply our method to the TGRS codes introduced in \cite{GZ} and extend them to longer non-RS MDS codes.

We can also start with a column-twisted MDS code over $\F_{13}$ based on \cite{LLWZ}. To make our discussion self-contained, we briefly introduce column twisted Reed-Solomon (CTRS) codes. Choose $n-1$ distinct evaluation points $a_1, a_2, \dots, a_{n-1} \in \F_{q}$, and let $b, c \in \F_{q}$ with a non-zero parameter $\lambda \in \F_{q}^*$.

For any element $x \in \F_{q}$, let $\boldsymbol{v}(x) = (1, x, x^2, \dots, x^{k-1})^T$. The CTRS code $C_1(b, c)$ is defined by the following $k \times n$ generator matrix
\[
G =
\begin{pmatrix}
    1 & 1 & \dots & 1 & 1 - \lambda \\
    a_1 & a_2 & \dots & a_{n-1} & b - \lambda c \\
    \vdots & \vdots & \ddots & \vdots & \vdots \\
    a_1^{k-1} & a_2^{k-1} & \dots & a_{n-1}^{k-1} & b^{k-1} - \lambda c^{k-1}
\end{pmatrix}.
\]

For CTRS $C_1(b, c)$, we have the following result

\begin{lemma} [\cite{LLWZ}, Lemma 1]
    The code $C_1(b, c)$ generated by $G$ is MDS if and only if for any $1 \le i_1 < i_2 < \dots < i_{k-1} \le n-1$,
    \[
    \prod_{j=1}^{k-1} (b - a_{i_j}) - \lambda \prod_{j=1}^{k-1} (c - a_{i_j}) \neq 0.
    \]
\end{lemma}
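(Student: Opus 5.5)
The natural route is the standard MDS criterion: $C_1(b,c)$ is MDS if and only if every $k\times k$ submatrix of $G$ is invertible. I would split the $k$-subsets of columns into two kinds.

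\textbf{Subsets avoiding the last column.} Here the submatrix is a Vandermonde matrix in $k$ distinct points $a_{i_1},\dots,a_{i_k}$. Its determinant is $\prod_{s<t}(a_{i_t}-a_{i_s})\neq 0$, so these minors impose no condition. This matches the fact that the statement only involves the last column.

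\textbf{Subsets containing the last column.} Fix $1\le i_1<\dots<i_{k-1}\le n-1$. The last column of $G$ is $\boldsymbol{v}(b)-\lambda\boldsymbol{v}(c)$, where $\boldsymbol{v}(x)=(1,x,\dots,x^{k-1})^T$. By multilinearity of the determinant in the last column,
\[
\det\bigl(\boldsymbol{v}(a_{i_1}),\dots,\boldsymbol{v}(a_{i_{k-1}}),\boldsymbol{v}(b)-\lambda\boldsymbol{v}(c)\bigr)
= V(a_{i_1},\dots,a_{i_{k-1}},b)-\lambda V(a_{i_1},\dots,a_{i_{k-1}},c),
\]
with $V$ the Vandermonde determinant. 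Each term factors as $V(a_{i_1},\dots,a_{i_{k-1}})\prod_{j}(x-a_{i_j})$ for $x=b$ or $x=c$. Pulling out the common factor $V(a_{i_1},\dots,a_{i_{k-1}})$ leaves
\[
\prod_{j=1}^{k-1}(b-a_{i_j})-\lambda\prod_{j=1}^{k-1}(c-a_{i_j}).
\]
Since the $a_i$ are distinct, the common factor is nonzero. So this minor vanishes exactly when the displayed expression does.

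Putting the two cases together gives the equivalence. The Vandermonde factorization holds even when $b$ or $c$ coincides with some $a_{i_j}$: the corresponding product is then simply $0$, so no separate case is needed. I expect no real obstacle. The only points needing care are sign conventions, since the order of the last column only affects an overall sign, and checking that the criterion ``all $k$-minors nonzero'' is exactly MDS-ness for a full-rank $k\times n$ generator matrix.
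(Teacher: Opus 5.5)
Your proof is correct and complete. The case split and the Vandermonde factorization are exactly what is needed, and the full-rank caveat is automatic once $n-1\ge k$, since the first $k$ columns already form a nonsingular Vandermonde matrix.

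The paper itself gives no proof; it quotes the result from \cite{LLWZ}. Your argument is the paper's Lemma~\ref{MainLemma} applied to the RS code on $a_1,\dots,a_{n-1}$ with $\boldsymbol{\beta}=\boldsymbol{v}(b)-\lambda\boldsymbol{v}(c)$, with the invertibility condition made explicit by the Vandermonde formula.
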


\begin{example}[\cite{LLWZ}, Theorem~2]
Let $\gamma=2$ in $\F_{19}$. Let $H=\langle\gamma^2\rangle$, $k=5$, $b=0$, $c=1$, and $(\lambda_1,\lambda_2)=(2,8)$. Taking the elements $\mu_i$ of $H\setminus\{1\}$ in increasing order and setting $a_i=(b-\mu_i c)/(1-\mu_i)$ gives
\[
\boldsymbol\alpha=(14,6,5,17,13,3,15,7).
\]
With $\boldsymbol v(x)=(1,x,x^2,x^3,x^4)^T$, the first eight columns of $G_0$ are $\boldsymbol v(a_i)$, and its last two columns are $\boldsymbol v(b)-\lambda_j\boldsymbol v(c)$, $j=1,2$. Explicitly,
\[
G_0=\begin{bmatrix}
1&1&1&1&1&1&1&1&18&12\\
14&6&5&17&13&3&15&7&17&11\\
6&17&6&4&17&9&16&11&17&11\\
8&7&11&11&12&8&12&1&17&11\\
17&4&17&16&4&5&9&7&17&11
\end{bmatrix}.
\]
$C_0$ is $[10,5,6]$ MDS and $\dim(C_0^2)=10$, so $C_0$ is non-RS.

Our extension search finds the unique admissible projective column
\[
\boldsymbol\beta_1=(0,0,0,0,1)^T.
\]
Thus $G_1=(G_0\mid\boldsymbol\beta_1)$ generates a $[11,5,7]$ MDS code $C_1$ over $\F_{19}$ with $\dim(C_1^2)=10$.
\end{example}

\begin{example} [\cite{LLWZ},  Lemma~1(1)]
    Let $w=2$ in $\F_{13}$. Let $\{a_1, a_2, \dots, a_6\}=\{0,1,2,3,4,5\}$, $b=6$, $c=12$, and $\lambda=1$. Then $C$ is a $[7,4,4]$ MDS code over $\F_{13}$ generated by
    \[
    G_0=
    \begin{pmatrix}
        1 & 1 & 1 & 1 & 1 & 1 & 0\\
        0 & 1 & 2 & 3 & 4 & 5 & 7\\
        0 & 1 & 4 & 9 & 3 & 12 & 9\\
        0 & 1 & 8 & 1 & 12 & 8 & 9
    \end{pmatrix}.
    \]
    We first show that $C$ is not equivalent to any RS code. A direct computation gives $\dim(C^2)=7=2k-1$, so the Schur-square criterion is inconclusive. We therefore apply Lemma~\ref{Cauchy}. Reducing $G_0$ to systematic form gives $[I_4\mid A]$, where
    \[
    A=
    \begin{bmatrix}
        12&9&12\\
        4&2&3\\
        7&6&3\\
        4&10&8
    \end{bmatrix}.
    \]
    Then
    \[
    A'=
    \begin{bmatrix}
        12&3&12\\
        10&7&9\\
        2&11&9\\
        10&4&5
    \end{bmatrix}.
    \]
    The determinant of the submatrix formed by the first three rows of $A'$ is $10\ne0$ in $\F_{13}$. Hence $\operatorname{rank}(A')=3>2$, and Lemma~\ref{Cauchy} shows that $C$ is not equivalent to any RS code. Hence, $C$ cannot be equivalent to any RS code.

    By applying our extension algorithm, we successively append projective columns. We append the vectors
    \[
    \boldsymbol{\beta}_1=(1,11,2,0)^T, \boldsymbol{\beta}_2=(1,11,7,1)^T.
    \]
    Then $G_2=(G_0\mid\boldsymbol{\beta}_1,\boldsymbol{\beta}_2)$ generates a $[9,4,6]$ non-RS MDS code over $\F_{13}$ with $\dim(C^2)=9$. Moreover, $\dim(C_2^2)>2k$. Since the Schur square of a one-column CTRS code $C_1(b,c)$ as defined in \cite{LLWZ} has dimension $2k$, $C_2$ is not equivalent to any CTRS code. Its length $9$ also exceeds $(13+3)/2=8$, the largest length attained by the subgroup-based construction in \cite{LLWZ} over $\F_{13}$.
\end{example}

Next, we apply our extension algorithm to twisted Reed--Solomon codes. We first recall the one-twist construction of \cite{BPR}. Let $1\le k<n\le q$, choose distinct evaluation points $\boldsymbol{\alpha}=(\alpha_1,\ldots,\alpha_n)\in\F_q^n$, and take
\[
1\le t\le n-k, 0\le h\le k-1, \eta\in\F_q^*.
\]
Here $t$ is the twist, $h$ is the hook, and $\eta$ is the twist coefficient. Define
\[
\mathcal{P}_{t,h,\eta}^{n,k}=\left\{\sum_{i=0}^{k-1}f_i x^i+\eta f_hx^{k-1+t}:f_0,\ldots,f_{k-1}\in\F_q\right\}.
\]
The corresponding twisted Reed--Solomon code is
\[
\mathcal{C}_{\boldsymbol{\alpha},t,h,\eta}^{n,k}=\left\{(f(\alpha_1),\ldots,f(\alpha_n)):f\in\mathcal{P}_{t,h,\eta}^{n,k}\right\}.
\]

We use the case $t=1$ and $h=0$, for which
\[
\mathcal{P}_{1,0,\eta}^{n,k}=\operatorname{span}_{\F_q}\{1+\eta x^k,x,x^2,\ldots,x^{k-1}\}.
\]
Following Definition~3, a code of this form is called a \emph{$(*)$-twisted Reed--Solomon code} if there is a proper subgroup $H$ of the multiplicative group $\F_q^*$ such that
\[
\alpha_i\in H\cup\{0\}\quad(1\le i\le n), (-1)^k\eta^{-1}\notin H.
\]

\begin{lemma}[\cite{BPR}, Corollary~27]\label{lem:BPR-star}
Every $(*)$-twisted code $\mathcal{C}_{\boldsymbol{\alpha},1,0,\eta}^{n,k}$ is an $[n,k,n-k+1]$ MDS code. If $3\le k<n/2$, it is not equivalent to any RS code.
\end{lemma}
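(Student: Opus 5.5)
We argue in two parts: first the MDS property, then non-equivalence to RS codes via the Schur square.

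\emph{MDS property.} A nonzero $f=f_0(1+\eta x^k)+f_1x+\cdots+f_{k-1}x^{k-1}$ has degree at most $k$. It suffices to show that $f$ has at most $k-1$ zeros among the $\alpha_i$; then every nonzero codeword has weight at least $n-k+1$.

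If $f_0=0$, then $\deg f\le k-1$ and we are done. Otherwise scale so that $f_0=1$. Suppose $f$ vanishes at $k$ distinct points $\alpha_{i_1},\dots,\alpha_{i_k}$. Since $\deg f=k$ with leading coefficient $\eta$, we get
\[
f=\eta\prod_{j=1}^{k}(x-\alpha_{i_j}).
\]
Comparing constant terms gives $1=\eta(-1)^k\prod_j\alpha_{i_j}$.

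If some $\alpha_{i_j}=0$, the right-hand side is $0\neq 1$, a contradiction. Otherwise every $\alpha_{i_j}\in H$, so
\[
(-1)^k\eta^{-1}=\prod_j\alpha_{i_j}\in H,
\]
contradicting the $(*)$ hypothesis. (Here $(-1)^k\eta^{-1}=(-1)^k\eta^{-1}$ is used as an element of $\F_q^*$, and $H$ is a subgroup, so products of its elements stay in $H$.) Hence $d\ge n-k+1$. Since the dimension is $k$ (all $\alpha_i$ are distinct and $n>k$), the Singleton bound gives $d=n-k+1$.

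\emph{Non-RS property.} For an RS code, or any code equivalent to one, the Schur square has dimension $\min(n,2k-1)$. This is an equivalence invariant, since $C^2$ of an equivalent code is equivalent, as noted in the preliminaries. So it suffices to show $\dim C^2=2k$ for the twisted code when $2k<n$.

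The products of the basis functions span
\[
\{x^2,\dots,x^{2k-2}\}\ \cup\ \{x(1+\eta x^k),\dots,x^{k-1}(1+\eta x^k)\}\ \cup\ \{(1+\eta x^k)^2\}.
\]
Reducing by the monomials $x^2,\dots,x^{2k-2}$, the second group supplies $x$ and $x^{k+1},\dots,x^{2k-1}$ modulo lower terms. Here $x\cdot(1+\eta x^k)=x+\eta x^{k+1}$, and the element $x^{k-1}(1+\eta x^k)$ contributes $x^{2k-1}$. The last generator $(1+\eta x^k)^2=1+2\eta x^k+\eta^2x^{2k}$ contributes a degree-$2k$ term with nonzero constant term.

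Tracking leading and trailing terms should show that the span is a $2k$-dimensional space of polynomials of degree at most $2k$. The main obstacle is verifying this linear independence carefully, which I would do by exhibiting $2k$ polynomials with distinct leading degrees. In characteristic $2$ the coefficient $2\eta$ vanishes, but the $x^{2k}$ term still has coefficient $\eta^2\neq0$, so this does not break the argument.

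Since $2k<n$, a nonzero polynomial of degree at most $2k$ cannot vanish on all $n$ distinct points. So evaluation is injective on this space, giving $\dim C^2=2k>2k-1$. Therefore the code is not equivalent to any RS code; this is where the hypothesis $k<n/2$ enters, and $k\ge3$ rules out degenerate small cases.
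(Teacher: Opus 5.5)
The paper gives no proof of this lemma; it quotes it from \cite{BPR}. Your argument is a correct self-contained proof along the standard lines:
\begin{itemize}
\item for the MDS property, the constant-term identity $\prod_j\alpha_{i_j}=(-1)^k\eta^{-1}$ is impossible, because the left side lies in $H\cup\{0\}$;
\item for non-equivalence, $\dim C^2=2k\neq 2k-1=\min(n,2k-1)$ once $n>2k$.
\end{itemize}

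The step you left as ``should show'' closes in one line. The products span exactly $\operatorname{span}\{x,x^2,\ldots,x^{2k-1},\,1+\eta^2x^{2k}\}$:
\begin{itemize}
\item the pure monomials give $x^2,\ldots,x^{2k-2}$;
\item the products $x^j(1+\eta x^k)$ with $2\le j\le k-2$ are then redundant;
\item $j=1$ contributes $x$, and $j=k-1$ contributes $x^{2k-1}$;
\item $(1+\eta x^k)^2-2\eta x^k=1+\eta^2x^{2k}$.
\end{itemize}
These $2k$ polynomials have distinct degrees, so they are independent. Your phrase that the second group ``supplies $x^{k+1},\ldots,x^{2k-1}$'' overstates it: only $x$ and $x^{2k-1}$ are new.

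This is also exactly where $k\ge 3$ is used. It guarantees $x^{k+1}\in\operatorname{span}\{x^2,\ldots,x^{2k-2}\}$, so that $x$ itself is obtained. For $k=2$ the product space is only $3$-dimensional, consistent with every $[n,2]$ MDS code with $n\le q$ being equivalent to an RS code.
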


For all the $(*)$-twisted examples below, $q$ is odd. Let $\gamma$ be a primitive element of $\F_q$, and choose
\[
H=\langle\gamma^2\rangle=\{\gamma^{2j}:0\le j<(q-1)/2\}.
\]
This is the subgroup of all nonzero squares in $\F_q$ and has order $(q-1)/2$. Taking
\[
T=H\cup\{0\}, \eta=(-1)^k\gamma^{-1}
\]
gives $n=|T|=(q+1)/2$ and $(-1)^k\eta^{-1}=\gamma\notin H$, so Lemma~\ref{lem:BPR-star} applies. The vector $\boldsymbol{\alpha}$ specifies the order in which the elements of $T$ are evaluated; this order is also the column order of the generator matrix.

Moreover, when $3\le k\le(q-1)/2-2$, the length $(q+1)/2$ is the largest possible among one-twist MDS codes with $t=1$, $h=0$, and evaluation points in $\F_q$.

\begin{example}[\cite{BPR}, Theorems~13]
Let $\gamma=3$ in $\F_{17}$, and set $H=\langle\gamma^2\rangle, \eta=\gamma^{-1}=6, \boldsymbol{\alpha}=(0,1,9,13,15,16,8,4,2)$. For $k=6$, evaluating $\mathcal P=\operatorname{span}_{\F_{17}}\{1+6x^6,x,x^2,x^3,x^4,x^5\}$ at $\boldsymbol{\alpha}$ gives a $(*)$-twisted $[9,6,4]$ MDS code $C_0$ with generator matrix
\[
G_0=
\begin{bmatrix}
1&7&8&12&11&7&8&12&11\\
0&1&9&13&15&16&8&4&2\\
0&1&13&16&4&1&13&16&4\\
0&1&15&4&9&16&2&13&8\\
0&1&16&1&16&1&16&1&16\\
0&1&8&13&2&16&9&4&15
\end{bmatrix}.
\]
Here $\dim(C_0^2)=9$. This does not means $C_0$ is not equivalent to some RS code. Instead, consider $D_0=C_0^\perp$. It is an $[9,3,7]$ MDS code,
with the following generator matrix:
\[
H_0=
\begin{bmatrix}
1&15&5&2&1&7&1&0&0\\
5&14&14&8&12&3&0&1&0\\
3&6&16&8&5&8&0&0&1
\end{bmatrix},
\qquad G_0H_0^T=0.
\]
We have $\dim(D_0^2)=6>5=2\dim(D_0)-1$. Thus $D_0$ is not equivalent to a RS code. Since the dual of a RS code is again a RS code up to equivalence, $C_0$ is also non-RS.

Applying our extension algorithm to $C_0$, we append
\[
\boldsymbol{\beta}_1=(1,3,16,1,16,13)^T.
\]
Then $G_1=(G_0\mid\boldsymbol{\beta}_1)$ generates a $[10,6,5]$ MDS code $C_1$. And we have $\dim(C_1^2)=10$. Similarly, let $D_1=C_1^\perp$. Then $D_1$ is a $[10,4,7]$ MDS code. Set $\boldsymbol h=(15,2,9,5,4,7,0,0,0)$. A generator matrix of $D_1$ is
\[
H_1=
\begin{bmatrix}
    H_0 & \mathbf{0}_{3\times1}\\
    \boldsymbol h & 1
\end{bmatrix}.
\]
We have $\dim(D_1^2)=10>7=2\dim(D_1)-1$. Consequently, $D_1$, and hence $C_1$, is non-RS. Moreover, $\dim(D_1^2)>2\dim(D_1)$, whereas the CTRS codes
constructed in \cite{LLWZ} have Schur-square dimension $2k$. Thus, $D_1$ is not equivalent to any code in those CTRS families. Its length matches the largest length attained by the construction of \cite{LLWZ} over $\F_{17}$.
\end{example}

\begin{example}[\cite{BPR}, Corollary~27]
Let $\gamma=3$, $\eta=6$ and $T=(0,1,9,13,15,16,8,4,2)$ in $\F_{17}$. Evaluating $\operatorname{span}_{\mathbb{F}_{17}}\{1+\eta x^4,x,x^2,x^3\}$ on $T$ gives the following code. Its length is the largest possible within the affine one-twist family with $t=1,h=0$, by Theorem~13 of \cite{BPR}. $C_0$ is a $[9,4,6]$ code over $\mathbb{F}_{17}$ with generate matrix
\[
G_0=\begin{bmatrix}
1 & 7 & 12 & 7 & 12 & 7 & 12 & 7 & 12 \\
0 & 1 & 9 & 13 & 15 & 16 & 8 & 4 & 2 \\
0 & 1 & 13 & 16 & 4 & 1 & 13 & 16 & 4 \\
0 & 1 & 15 & 4 & 9 & 16 & 2 & 13 & 8
\end{bmatrix}.
\]
The Schur square satisfies $\dim(C_0^2)=8>7$, so $C_0$ is non-RS. Applying our extension algorithm, we successively append
\begin{align*}
\boldsymbol{\beta}_{1}&=(0,1,0,6)^T
\end{align*}
The matrix $G_{1}=(G_0\mid\boldsymbol{\beta}_1)$ generates a $[10,4,7]$ MDS code over $\mathbb{F}_{17}$, with $\dim(C_{1}^2)=9$. Similarly as pervious example, this code is not equivalent to any code in those CTRS families. Its length also matches the largest length attained by the construction of \cite{LLWZ} over $\F_{17}$.
\end{example}

We next take $k=3$ in the same multiplicative-subgroup construction. The common choice of the twist coefficient now becomes $\eta=-\gamma^{-1}$, and the initial code is
\[
C_0=\mathcal{C}_{\boldsymbol{\alpha},1,0,\eta}^{n,3}, \mathcal{P}_{1,0,\eta}^{n,3}=\operatorname{span}_{\F_q}\{1+\eta x^3,x,x^2\}.
\]
In each of the following examples, the evaluation set remains $T=H\cup\{0\}$, with its elements ordered as specified by $\boldsymbol{\alpha}$. For these dimension-three codes, Schur-square dimension is $6$ which means they are not equivalent to any RS codes.

\begin{example}[\cite{BPR}, Corollary~27]
Let $\gamma=2$ in $\F_{13}$, and set $\eta=6, \boldsymbol{\alpha}=(0,1,3,4,9,10,12)$. The resulting $[7,3,5]$ MDS code $C_0$ over $\F_{13}$ is generated by
\[
G_0=
\begin{bmatrix}
            1&7&7&8&7&8&8\\
            0&1&3&4&9&10&12\\
            0&1&9&3&3&9&1
        \end{bmatrix}.
\]
A direct computation gives $\dim(C_0^2)=6$, so $C_0$ is non-RS.

Applying our extension algorithm, we successively append the following column vectors:
\[
\boldsymbol{\beta}_1=(1,1,0)^T, \boldsymbol{\beta}_2=(1,5,3)^T, \boldsymbol{\beta}_3=(1,7,3)^T.
\]
Thus $G_3=(G_0\mid\boldsymbol{\beta}_1,\boldsymbol{\beta}_2, \boldsymbol{\beta}_3)$ generates a $[10,3,8]$ MDS code $C_3$ over $\F_{13}$. Moreover, $\dim(C_3^2)=6$, so the extended code is again non-RS.
\end{example}

\begin{example}[\cite{BPR}, Corollary~27]
Let $\gamma=2$, and set $\eta=9, \boldsymbol{\alpha}=(0,1,4,16,7,9,17,11,6,5)$ in $\F_{19}$. The resulting $[10,3,8]$ MDS code $C_0$ over $\F_{19}$ is generated by
\[
G_0=
\begin{bmatrix}
            1&10&7&5&10&7&5&10&7&5\\
            0&1&4&16&7&9&17&11&6&5\\
            0&1&16&9&11&5&4&7&17&6
        \end{bmatrix}.
\]
A direct computation gives $\dim(C_0^2)=6$, so $C_0$ is non-RS.

Applying our extension algorithm, we successively append the following column vectors:
\[
\boldsymbol{\beta}_1=(1,5,3)^T, \boldsymbol{\beta}_2=(1,16,14)^T, \boldsymbol{\beta}_3=(1,17,2)^T, \boldsymbol{\beta}_4=(0,0,1)^T.
\]
Thus $G_4=(G_0\mid\boldsymbol{\beta}_1,\boldsymbol{\beta}_2, \boldsymbol{\beta}_3,\boldsymbol{\beta}_4)$ generates a $[14,3,12]$ MDS code $C_4$ over $\F_{19}$. Moreover, $\dim(C_4^2)=6$, so the extended code is again non-RS.
\end{example}
\begin{example}[\cite{BPR}, Corollary~27]
Let $\gamma=5$ in $\F_{23}$. Set $\eta=9, \boldsymbol{\alpha}=(0,1,2,4,8,16,9,18,13,3,6,12)$. The resulting $[12,3,10]$ MDS code $C_0$ over $\F_{23}$ is generated by

\[
G_0=
\left[
\begin{array}{*{12}{c}}
            1&10&4&2&9&19&7&3&17&14&13&5\\
            0&1&2&4&8&16&9&18&13&3&6&12\\
            0&1&4&16&18&3&12&2&8&9&13&6
\end{array}
\right].
\]
A direct computation gives $\dim(C_0^2)=6$, so $C_0$ is non-RS.

Applying our extension algorithm, we successively append the following column vectors:
\[
\boldsymbol{\beta}_1=(1,11,2)^T, \boldsymbol{\beta}_2=(1,17,3)^T.
\]
Thus $G_2=(G_0\mid\boldsymbol{\beta}_1,\boldsymbol{\beta}_2)$ generates a $[14,3,12]$ MDS code $C_2$ over $\F_{23}$. Moreover, $\dim(C_2^2)=6$, so the extended code is again non-RS.
\end{example}

\begin{example}[\cite{BPR}, Corollary~27]
Let $\F_{27}=\F_3(a)$, where $a^3+2a+1=0$ and $a$ is primitive. Taking $\gamma=a$, set $\eta=a^{12}, \boldsymbol{\alpha}=(0,1,a^2,a^4,\ldots,a^{24})$. The resulting $[14,3,12]$ MDS code $C_0$ over $\F_{27}$ is generated by

\[
G_0=
\left[
\begin{array}{*{14}{c}}
1&a^2&a^7&a^{19}&a^{18}&a^6&a^{22}&a^{14}&a^{21}&a^{15}&a^{16}&a^5&a^{13}&a^{11}\\
0&1&a^2&a^4&a^6&a^8&a^{10}&a^{12}&a^{14}&a^{16}&a^{18}&a^{20}&a^{22}&a^{24}\\
0&1&a^4&a^8&a^{12}&a^{16}&a^{20}&a^{24}&a^2&a^6&a^{10}&a^{14}&a^{18}&a^{22}
\end{array}
\right].
\]

A direct computation gives $\dim(C_0^2)=6$, so $C_0$ is non-RS.

Applying our extension algorithm, we successively append the following column vectors:
\[
\boldsymbol{\beta}_1=(1,0,a^{18})^T, \boldsymbol{\beta}_2=(1,a^9,a^{18})^T.
\]
Thus $G_2=(G_0\mid\boldsymbol{\beta}_1,\boldsymbol{\beta}_2)$ generates a $[16,3,14]$ MDS code $C_2$ over $\F_{27}$. Moreover, $\dim(C_2^2)=6$, so the extended code is again non-RS.
\end{example}

\begin{example}[\cite{BPR}, Corollary~27]
Let $\gamma=2$ in $\F_{29}$. Set $\eta=14, \boldsymbol{\alpha}=(0,1,4,16,6,24,9,7,28,25,13,23,5,20,22)$. The resulting $[15,3,13]$ MDS code $C_0$ over $\F_{29}$ is generated by
\[
G_0=
\left[
\begin{array}{*{15}{c}}
1&15&27&12&9&20&28&18&16&4&19&22&11&3&13\\
0&1&4&16&6&24&9&7&28&25&13&23&5&20&22\\
0&1&16&24&7&25&23&20&1&16&24&7&25&23&20
\end{array}
\right].
\]
A direct computation gives $\dim(C_0^2)=6$, so $C_0$ is non-RS.

Applying our extension algorithm, we successively append
\[
\boldsymbol{\beta}_1=(1,8,24)^T.
\]
Thus $G_1=(G_0\mid\boldsymbol{\beta}_1)$ generates a $[16,3,14]$ MDS code $C_1$ over $\F_{29}$. Moreover, $\dim(C_1^2)=6$, so the extended code is again non-RS.
\end{example}

We now turn to $(+)$-twisted codes with $k=3$, $t=1$, and $h=2$. Their polynomial space is
\[
\mathcal{P}_{1,2,\eta}^{n,3}=\operatorname{span}_{\F_q}\{1,x,x^2+\eta x^3\}.
\]
Choose the evaluation points from an additive subgroup $V\subsetneq\F_q$ and require $\eta^{-1}\notin V$. The resulting code is MDS by Theorem~15 and is non-RS when $n>6$ by Corollary~28. Moreover, Remark~16 permits adjoining $f(\infty)$, defined as the coefficient of $x^3$ in $f$, while preserving the MDS property. In the displayed basis, this appends the column $(0,0,\eta)^T$.

\begin{example}[\cite{BPR}, Corollary~28]
Let $\F_{16}=\F_2(a)$, where $a^4+a+1=0$, and set
\[
V=\operatorname{span}_{\F_2}\{1,a,a^2\}, \eta=a^{12}.
\]
Since $\eta^{-1}=a^3\notin V$, evaluating $\mathcal{P}_{1,2,\eta}^{8,3}$ at
\[
\boldsymbol{\alpha}=(0,1,a,a^2,a^4,a^8,a^5,a^{10})
\]
gives an $[8,3,6]$ MDS code $C_0$ over $\F_{16}$ generated by
\[
G_0=
\begin{bmatrix}
            1&1&1&1&1&1&1&1\\
            0&1&a&a^2&a^4&a^8&a^5&a^{10}\\
            0&a^{11}&a^8&a^7&a^{12}&a^{11}&a^3&a^{14}
        \end{bmatrix}.
\]
A direct computation gives $\dim(C_0^2)=6$, so $C_0$ is non-RS.

Applying our extension algorithm, we successively append the following column vectors:
\[
\boldsymbol{\beta}_1=(1,0,a^7)^T, \boldsymbol{\beta}_2=(1,a^2,a^3)^T.
\]
Thus $G_2=(G_0\mid\boldsymbol{\beta}_1,\boldsymbol{\beta}_2)$ generates a $[10,3,8]$ MDS code $C_2$ over $\F_{16}$. Moreover, $\dim(C_2^2)=6$, so the extended code is again non-RS.
\end{example}

\begin{example}[\cite{BPR}, Corollary~28]
Let $\F_{32}=\F_2(a)$, where $a^5+a^2+1=0$ and $a$ is primitive. Set $V=\operatorname{span}_{\F_2}\{1,a,a^2,a^3\}, \eta=a^{27}$. Since $\eta^{-1}=a^4\notin V$, evaluating $\mathcal{P}_{1,2,\eta}^{16,3}$ at all elements of $V$ and adjoining the evaluation at infinity gives a $[17,3,15]$ MDS code $C_0$ over $\F_{32}$ generated by
\[
\setlength{\arraycolsep}{4pt}
G_0=
\left[
\begin{array}{*{17}{c}}
1&1&1&1&1&1&1&1&1&1&1&1&1&1&1&1&0\\
0&a^3&a^2&a^{20}&a&a^6&a^{19}&a^{12}&1&a^{29}&a^5&a^8&a^{18}&a^{27}&a^{11}&a^{23}&0\\
0&a^{23}&a^7&a^{18}&a^{28}&a^{17}&1&a^{13}&a^6&a^{17}&a^{28}&a^{26}&a^{18}&a^4&a^{13}&a^{26}&a^{27}
\end{array}
\right].
\]
A direct computation gives $\dim(C_0^2)=6$, so $C_0$ is non-RS.

Applying our extension algorithm, we successively append the following column vectors:
\[
\boldsymbol{\beta}_1=(1,a^4,0)^T, \boldsymbol{\beta}_2=(0,1,a^4)^T.
\]
Thus $G_2=(G_0\mid\boldsymbol{\beta}_1,\boldsymbol{\beta}_2)$ generates a $[19,3,17]$ MDS code $C_2$ over $\F_{32}$. Moreover, $\dim(C_2^2)=6$, so the extended code is again non-RS.
\end{example}

In \cite{JMXZ}, non-RS MDS codes are constructed by selecting polynomial spaces with omitted monomials. For $I=\{0,1,\ldots,k-2,k\}$, define
\[
V_I=\operatorname{span}_{\F_q}\{x^i:i\in I\}, C(T,I)=\operatorname{ev}_{\boldsymbol{\alpha}}(V_I),
\]
where $T=\{\alpha_1,\ldots,\alpha_n\}\subseteq\F_q$. Thus $x^{k-1}$ is replaced by $x^k$, giving the generator matrix
\[
G_I=
\begin{bmatrix}
1 & \cdots & 1\\
\vdots & \ddots & \vdots\\
\alpha_1^{k-2} & \cdots & \alpha_n^{k-2}\\
\alpha_1^k & \cdots & \alpha_n^k
\end{bmatrix}.
\]

\begin{lemma}[\cite{JMXZ}, Proposition~V.1]
Suppose $3\le k\le n/2$. If
\[
\sum_{j=1}^{k}\alpha_{i_j}\ne0, \text{for every }1\le i_1<\cdots<i_k\le n,
\]
then $C(T,I)$ is an $[n,k,n-k+1]$ non-RS MDS code. In particular, over an odd prime field $\F_p$, the choice $T=\{0,1,\ldots,n-1\}$ satisfies this condition whenever
\[
2k\le n<\frac{p}{k}+\frac{k+1}{2}.
\]
\end{lemma}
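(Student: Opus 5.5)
The plan has three parts: the MDS property via a generalized Vandermonde determinant, the non-RS property via the Schur square, and the prime-field statement by an integer size estimate.

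\textbf{MDS property.} I would fix a $k$-subset $\{i_1<\cdots<i_k\}$ and compute the $k\times k$ minor of $G_I$ on these columns. Its rows are $x^0,\dots,x^{k-2},x^k$, so it is the generalized Vandermonde determinant attached to the partition $(1)$. By the bialternant formula for Schur functions, this minor equals
\[
\prod_{s<t}(\alpha_{i_t}-\alpha_{i_s})\cdot h_1(\alpha_{i_1},\dots,\alpha_{i_k}) = \prod_{s<t}(\alpha_{i_t}-\alpha_{i_s})\cdot \sum_{j=1}^k\alpha_{i_j}.
\]
For a self-contained check, one can also see this directly: the determinant of $(x^0,\dots,x^{k-2},x^k)$ evaluated at $k$ distinct points vanishes exactly when some polynomial $x^k+c_{k-2}x^{k-2}+\dots+c_0$ vanishes at all of them. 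That polynomial must then be $\prod_j(x-\alpha_{i_j})$, and its $x^{k-1}$-coefficient $-\sum_j\alpha_{i_j}$ would have to be $0$. The points are distinct and the sums are nonzero by hypothesis, so every $k$ columns of $G_I$ are independent, and $C(T,I)$ is $[n,k,n-k+1]$ MDS.

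\textbf{Non-RS property.} I would use the Schur square. $C(T,I)^2$ is the evaluation of $\operatorname{span}\{x^{a+b}:a,b\in I\}$. Computing $I+I$ gives
\[
I+I=\{0,\dots,2k-4\}\cup\{k-2,\dots,2k-2\}\cup\{2k\},
\]
and since $k\ge 3$ this covers $\{0,\dots,2k-2\}\cup\{2k\}$, which is $2k$ monomials. An RS code of dimension $k$ has Schur square of dimension $\min(n,2k-1)$, so it suffices to show that evaluation on $T$ is injective on this $2k$-dimensional space. Degrees up to $2k-2$ are fine because $n\ge 2k>2k-1$. A relation involving $x^{2k}$ would be a monic polynomial $x^{2k}+g$ with $\deg g\le 2k-2$ vanishing on $T$. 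This is impossible if $n\ge 2k+1$. If $n=2k$, it forces $x^{2k}+g=\prod_{\alpha\in T}(x-\alpha)$, whose $x^{2k-1}$-coefficient gives $\sum_{\alpha\in T}\alpha=0$.

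\textbf{Main obstacle: the boundary case $n=2k$ with total sum $0$.} The $k$-subset hypothesis does not obviously exclude $\sum_{\alpha\in T}\alpha=0$, and in that case $\dim C^2=2k-1$, so the Schur-square criterion is inconclusive. I would first try to exploit the structure: when $\sum_{\alpha\in T}\alpha=0$, any $k$-subset $J$ and its complement $J^c$ satisfy $\sum_{J}\alpha=-\sum_{J^c}\alpha$. I would look for a $k$-subset whose sum is forced to be zero, contradicting the hypothesis. Failing that, I would apply Lemma~\ref{Cauchy}. Writing $G_I$ in systematic form $[I_k\mid A]$ with columns indexed by $J$ and $J^c$, the entries of $A$ are ratios of minors. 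By the formula above, each minor is a Vandermonde factor times a subset sum. Vandermonde factors of this kind give Cauchy entries, so $A'$ becomes a Cauchy matrix perturbed by subset-sum factors. I would then exhibit a nonzero $3\times 3$ minor of $A'$, and I expect this computation to be the real difficulty.

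\textbf{Prime-field statement.} Take $T=\{0,1,\dots,n-1\}\subset\F_p$ and view any $k$ distinct elements as integers. Their sum is at least $0+1+\dots+(k-1)=k(k-1)/2>0$. It is at most
\[
\sum_{j=0}^{k-1}(n-1-j)=k(n-1)-\frac{k(k-1)}{2}=k\Bigl(n-\frac{k+1}{2}\Bigr),
\]
which is $<p$ exactly when $n<p/k+(k+1)/2$. Hence the sum is nonzero modulo $p$. Together with $2k\le n$, which gives $k\le n/2$, the first part applies.
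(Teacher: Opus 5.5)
Your MDS computation is correct, and so is the integer estimate for $T=\{0,\dots,n-1\}$. The Schur-square argument is also correct, both for $n\ge 2k+1$ and for $n=2k$ when $\sum_{\alpha\in T}\alpha\ne 0$. One small slip: the middle block of $I+I$ is $\{k,\dots,2k-2\}$, not $\{k-2,\dots,2k-2\}$, which changes nothing. The paper only quotes this lemma and gives no proof, so I am judging your argument against the statement itself.

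\textbf{The gap.} The genuine gap is the case you flag yourself, $n=2k$ with $\sum_{\alpha\in T}\alpha=0$. For $k=3$ it cannot be closed, because the statement is false there. Take $q=13$, $k=3$, $T=\{1,2,3,4,5,11\}$. The $3$-subset sums lie in $\{6,\dots,12\}\cup\{14,\dots,20\}$, so none vanishes mod $13$ and the hypotheses hold. But the total sum is $26\equiv 0$. Hence $\prod_{\alpha\in T}(x-\alpha)=x^6+c_4x^4+c_3x^3+c_2x^2+c_1x+c_0$ has no $x^5$ term, and every column $(1,\alpha,\alpha^3)^T$ lies on the conic
\[
X_2^2+c_4X_1X_2+c_3X_0X_2+c_2X_1^2+c_1X_0X_1+c_0X_0^2=0 .
\]
This conic is nondegenerate, since a degenerate conic contains at most $4$ points of an arc. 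So the code is GRS, and since $6<13$ it is equivalent to an RS code. This kills both of your plans: a zero total sum is compatible with all $k$-sums being nonzero, and there is no nonzero $3\times 3$ minor to find.

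\textbf{The case $k\ge 4$.} Here your plan via Lemma~\ref{Cauchy} does work, and cheaply. Take $\{\alpha_1,\dots,\alpha_k\}$ as information set and let $\beta_j$ be the other points. Put $E=\sum_i\alpha_i$ and $P(x)=\prod_i(x-\alpha_i)$. Cramer's rule together with your minor formula gives
\[
A_{ij}=\frac{P(\beta_j)}{P'(\alpha_i)\,E}\cdot\frac{\beta_j-\alpha_i+E}{\beta_j-\alpha_i}.
\]
So, up to row and column scaling, $A'=\mathbf{1}\mathbf{1}^T-E\,K$ with $K_{ij}=1/\bigl(\beta_j-(\alpha_i-E)\bigr)$. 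Every denominator is a $k$-subset sum, so $K$ is a genuine Cauchy matrix of rank $\min(k,n-k)=k$, and $E\neq 0$. Hence $\operatorname{rank}A'\ge k-1\ge 3$, some $3\times 3$ minor is nonzero, and the code is non-RS whatever the total sum is.

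\textbf{The case $k=3$, $n=6$.} Use the matrix determinant lemma and $\mathbf{1}^TK^{-1}\mathbf{1}=\sum_j\beta_j-\sum_i(\alpha_i-E)$. These show $\det A'$ is a nonzero multiple of $E^2\det K\cdot\sum_{\alpha\in T}\alpha$. So the code is RS exactly when the total sum vanishes, which confirms the counterexample. The lemma therefore needs the extra hypothesis $\sum_{\alpha\in T}\alpha\neq0$ when $k=3$, $n=6$.

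\textbf{The prime-field statement.} Your prime-field paragraph routes through the incomplete first part, so you should add one observation. For $n=2k$ the total sum of $\{0,\dots,n-1\}$ is $k(2k-1)$. The bound $n<p/k+(k+1)/2$ forces $p>k(3k-1)/2>2k-1$, so this sum is nonzero mod $p$. Your Schur-square argument alone then settles the ``in particular'' statement.
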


Taking $p=13$, $k=3$, and $T=\{0,1,2,3,4,5\}$ gives the $[6,3,4]_{13}$ code below.

\begin{example}[\cite{JMXZ}, Corollary~V.2]
    Let $T=\{0,1,2,3,4,5\}\subseteq\F_{13}$, and take the evaluation-polynomial space $\operatorname{span}_{\F_{13}}\{1,x,x^3\}$. Then $C_0$ is a $[6,3,4]$ MDS code over $\F_{13}$ generated by
    \[
        G_0=
        \begin{pmatrix}
            1&1&1&1&1&1\\
            0&1&2&3&4&5\\
            0&1&8&1&12&8
        \end{pmatrix}.
    \]
    A direct computation gives $\dim(C_0^2)=6$. Hence, $C_0$ is not equivalent to any RS code.

    Applying our extension algorithm successively gives
    \[
        \boldsymbol{\beta}_1=(1,0,3)^T, \boldsymbol{\beta}_2=(1,1,0)^T, \boldsymbol{\beta}_3=(1,3,5)^T,\boldsymbol{\beta}_4=(1,4,5)^T.
    \]
    Therefore, $G_4=(G_0\mid \boldsymbol{\beta}_1,\boldsymbol{\beta}_2, \boldsymbol{\beta}_3,\boldsymbol{\beta}_4)$ generates a $[10,3,8]$ non-RS MDS code $C_4$ over $\F_{13}$, and $\dim(C_4^2)=6$.
\end{example}

Besides, cyclic MDS codes are constructed in \cite{XT}. Let $\zeta_n$ be a primitive $n$th root of unity, put $K=\mathbb{Q}(\zeta_n)$, and choose $J=\{j_1<\cdots<j_k\}\subseteq\{0,\ldots,n-1\}$. Set
\[
\widetilde G_J=\bigl[\zeta_n^{\,j_i(s-1)}\bigr]_{  1\le i\le k,\ 1\le s\le n}.
\]
For each $k$-subset $S$ of its columns, let $\Delta_S$ be the corresponding determinant. Assuming every $\Delta_S\ne0$, define
\[
\mathcal{B}_{n,J}=\{p:p\ prime, p\mid n|N_{K/\mathbb{Q}}(\Delta_S)|\text{ for some }S\},
\]
where $N_{K/\mathbb{Q}}$ is the algebraic norm. For a prime ideal $\mathfrak p$ of $\mathbb{Z}[\zeta_n]$ above $p\notin\mathcal{B}_{n,J}$, identify its residue field with $\F_q$ and write $\alpha=\zeta_n\bmod\mathfrak p$. The reduced generator matrix is
\[
G_J=\bigl[\alpha^{\,j_i(s-1)}\bigr]_{  1\le i\le k,\ 1\le s\le n}.
\]

\begin{lemma}[\cite{XT}, Theorem~11]
Under the preceding assumptions, for $n\ge4$ and $k\ge3$, the matrix $G_J$ generates an $[n,k,n-k+1]$ cyclic MDS code over $\F_q$. If additionally $n\ge7$, $j_k\le(n-1)/2$, and $J$ is not an arithmetic progression, this code is non-RS.
\end{lemma}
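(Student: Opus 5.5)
The plan has two parts: first the MDS property of $G_J$, then the non-RS claim.

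\emph{MDS property.} Over $K=\mathbb{Q}(\zeta_n)$, every $k\times k$ minor $\Delta_S$ of $\widetilde G_J$ is nonzero by hypothesis, and lies in $\mathbb{Z}[\zeta_n]$. Reducing modulo a prime $\mathfrak p$ above $p\notin\mathcal{B}_{n,J}$, the image of $\Delta_S$ is exactly the corresponding minor of $G_J$. This image vanishes iff $\Delta_S\in\mathfrak p$. That would force $p\mid N_{K/\mathbb{Q}}(\Delta_S)$, since the norm is the product of conjugates and $\mathfrak p\cap\mathbb{Z}=p\mathbb{Z}$. This contradicts $p\notin\mathcal{B}_{n,J}$. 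Hence every $k$ columns of $G_J$ are independent, so the code is $[n,k,n-k+1]$ MDS.

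Next, $p\nmid n$ (also excluded by $\mathcal{B}_{n,J}$), so $\alpha$ is a primitive $n$th root of unity in $\F_q$. The rows $(\alpha^{j_i(s-1)})_s$ are eigenvectors of the cyclic shift, so the code is invariant under shifts and is cyclic. Here $n\ge4$ and $k\ge3$ are only needed to make the setting nontrivial.

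\emph{Non-RS claim.} I would use the Schur square. The product of rows $i$ and $i'$ is $(\alpha^{(j_i+j_{i'})(s-1)})_s$. Since $j_k\le (n-1)/2$, all sums $j_i+j_{i'}$ lie in $\{0,\dots,n-1\}$ with no reduction mod $n$. Distinct exponents give distinct characters, which are linearly independent because $\alpha$ is primitive of order $n$. Therefore
\[
\dim C^2 = |J+J|.
\]
By the Freiman-type inverse result for sumsets of integers, $|J+J|\ge 2k-1$, with equality iff $J$ is an arithmetic progression. So for non-AP $J$ we get $\dim C^2\ge 2k$. Any RS code has $\dim C^2=\min(n,2k-1)$, and here $2k\le 2j_k+2\le n+1$.

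\emph{Main obstacle.} The delicate point is the edge case $n$ small relative to $2k$, where $\min(n,2k-1)$ might coincide with $\dim C^2$. This is where I expect the hypothesis $n\ge7$ is used. Since $j_k\le (n-1)/2$ gives $2k-1\le n$, the inequality $\dim C^2\ge 2k$ needs $2k\le n$. If $n=2k-1$, then $J=\{0,\dots,k-1\}$ is forced, which is an AP, so that case is excluded. Hence $\dim C^2\ge 2k>2k-1=\dim(\text{RS}^2)$, and the code is non-RS. The condition $n\ge7$ ensures non-AP subsets with $k\ge3$ fit within the range, and I would check this small-case bookkeeping carefully.
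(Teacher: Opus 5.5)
Your proof is correct and complete. The norm argument shows that every $k\times k$ minor of $G_J$ is nonzero. The condition $p\nmid n$ makes $\alpha$ a primitive $n$th root of unity, so the code is cyclic and the vectors $(\alpha^{m(s-1)})_s$, $0\le m\le n-1$, are independent. The Schur-square count then gives $\dim C^2=|J+J|\ge 2k>2k-1$. The paper gives no proof of this lemma; it is simply imported from the cyclic-MDS reference. So there is no internal argument to compare against, and yours is the argument the hypotheses ($j_k\le (n-1)/2$ to avoid wrap-around, non-AP to beat $2k-1$) are built for.

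Your ``main obstacle'' is not actually an obstacle, and $n\ge 7$ is never used. Since $C^2\subseteq\F_q^n$, you get $2k\le |J+J|\le n$ automatically for non-AP $J$, so $\min(n,2k-1)=2k-1$ without any case analysis. Moreover, a non-AP $J$ of $k$ distinct nonnegative integers has $j_k\ge k$, because $j_k=k-1$ forces $J=\{0,\dots,k-1\}$. Hence $n\ge 2j_k+1\ge 2k+1\ge 7$ when $k\ge 3$, so the hypothesis $n\ge 7$ is implied by the others.
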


For $J=\{0,1,3\}$, Example~12 and Table~IV give $\mathcal{B}_{7,J}=\{2,7\}$ and $\mathcal{B}_{8,J}=\{2,3\}$. Taking $(n,q)=(7,29)$ and $(8,17)$ therefore yields the $[7,3,5]_{29}$ and $[8,3,6]_{17}$ codes below.

\begin{example}[\cite{XT}, Example~12]
    Let $\alpha=2$ in $\F_{17}$, and take the defining set $J=\{0,1,3\}$. Then $C_0$ generated by $G_J$ is an $[8,3,6]$ cyclic MDS code over $\F_{17}$ generated by
    \[
        G_0=
        \begin{pmatrix}
            1&1&1&1&1&1&1&1\\
            1&2&4&8&16&15&13&9\\
            1&8&13&2&16&9&4&15
        \end{pmatrix}.
    \]
    A direct computation gives $\dim(C_0^2)=6$. Hence, $C_0$ is not equivalent to any RS code.

    Applying our extension algorithm successively gives
    \[
    \begin{aligned}
        \boldsymbol{\beta}_1&=(1,1,2)^T, \boldsymbol{\beta}_2&=(1,2,1)^T, \boldsymbol{\beta}_3&=(1,9,13)^T, \boldsymbol{\beta}_4&=(1,13,9)^T.
    \end{aligned}
    \]
    Therefore, $G_4=(G_0\mid \boldsymbol{\beta}_1,\boldsymbol{\beta}_2,\boldsymbol{\beta}_3,\boldsymbol{\beta}_4)$ generates a $[12,3,10]$ non-RS MDS code $C_4$ over $\F_{17}$, and $\dim(C_4^2)=6$.
\end{example}

\begin{example}[\cite{XT}, Example~12]
    Let $\alpha=16$ in $\F_{29}$, so that $\alpha$ has multiplicative order $7$, and take the defining set $J=\{0,1,3\}$. Then $C_0$ generated by $G_J$ is a $[7,3,5]$ cyclic MDS code over $\F_{29}$ generated by
    \[
        G_0=
        \begin{pmatrix}
            1&1&1&1&1&1&1\\
            1&16&24&7&25&23&20\\
            1&7&20&24&23&16&25
        \end{pmatrix}.
    \]
    A direct computation gives $\dim(C_0^2)=6$. Hence, $C_0$ is not equivalent to any RS code.

    Applying our extension algorithm successively gives
    \[
    \begin{aligned}
        \boldsymbol{\beta}_1&=(1,0,0)^T, &
        \boldsymbol{\beta}_2&=(1,0,1)^T, &
        \boldsymbol{\beta}_3&=(1,1,0)^T,\\
        \boldsymbol{\beta}_4&=(1,2,10)^T, &
        \boldsymbol{\beta}_5&=(1,2,23)^T, &
        \boldsymbol{\beta}_6&=(1,4,24)^T,\\
        \boldsymbol{\beta}_7&=(1,7,3)^T, &
        \boldsymbol{\beta}_8&=(1,13,11)^T, &
        \boldsymbol{\beta}_9&=(1,28,8)^T.
    \end{aligned}
    \]
    Therefore, $G_9=(G_0\mid\boldsymbol{\beta}_1,\boldsymbol{\beta}_2,\ldots,\boldsymbol{\beta}_9)$  generates a $[16,3,14]$ non-RS MDS code $C_9$ over $\F_{29}$, and $\dim(C_9^2)=6$.
\end{example}

 All the source codes in SAGE of the above examples can be found in \cite{A}. The examples above illustrate the applicability of the hyperplane-based extension procedure to several families of non-RS MDS codes, including elliptic-curve codes, twisted Reed--Solomon codes, column-twisted codes, missing-monomial evaluation codes, and cyclic codes. For the selected fields of order at most $32$, the procedure produces longer MDS codes while preserving inequivalence to RS codes.
The resulting codes also exhibit different Schur-square behaviour. In particular, the extended codes with parameters $[9,4,6]$ over $\F_9$, $[9,4,6]$ over $\F_{13}$, and $[10,4,7]$ over $\F_{17}$ all satisfy $\dim(C^2)=9>2k=8$. Thus, the extension procedure can increase the Schur-square dimension beyond $2k$ while retaining both the MDS and non-RS properties. Over $\F_8$, the resulting $[10,3,8]$ codes attain the maximum possible length $q+2$ for dimension-three MDS codes.

Table~\ref{larger_fields} summarizes the extension sequences presented in this section. Here, $C_0$ denotes the initial code and $C_r$ denotes the code obtained after appending the listed $r$ columns.

\begin{table}[htbp]
    \centering
    \small
    \setlength{\tabcolsep}{5pt}
    \caption{Summary of the reported extensions of non-RS MDS codes }
    \label{larger_fields}
    \begin{tabular}{clcccc}
        \toprule
        Field & Source
        & \makecell{Initial\\code $C_0$}
        & \makecell{Extended\\code $C_r$}
        & $\dim(C_0^2)$ & $\dim(C_r^2)$\\
        \midrule
        $\F_{8}$  & Elliptic curve \cite{WLLZ}  & $[7,3,5]$ & $[10,3,8]$ & $6$ & $6$\\

        $\F_{9}$  & Elliptic curve  & $[7,4,4]$ & $[9,4,6]^*$ & $7$ & $9$\\

        $\F_{11}$ & TGRS \cite{GZ}   & $[8,3,6]$ & $[9,3,7]^*$ & $6$ & $6$\\

        $\F_{11}$ & TGRS \cite{GZ}   & $[8,4,5]$ & $[9,4,6]^*$ & $7$ & $8$\\

        $\F_{13}$ & Missing-monomial \cite{JMXZ}  & $[6,3,4]$ & $[10,3,8]^*$ & $6$ & $6$\\

        $\F_{13}$ & $(\ast)$-twisted \cite{BPR}   & $[7,3,5]$ & $[10,3,8]^*$ & $6$ & $6$\\

        $\F_{13}$ & CTRS \cite{LLWZ}  & $[7,4,4]$ & $[9,4,6]^*$ & $7$ & $9$\\

        $\F_{16}$ & $(+)$-twisted \cite{BPR}   & $[8,3,6]$ & $[10,3,8]$ & $6$ & $6$\\

        $\F_{17}$ & Cyclic \cite{XT}  & $[8,3,6]$ & $[12,3,10]^*$ & $6$ & $6$\\

        $\F_{17}$ & $(\ast)$-twisted \cite{BPR}   & $[9,4,6]$ & $[10,4,7]^*$ & $8$ & $9$\\

        \multirow{2}{*}{$\F_{17}$}
            & $(\ast)$-twisted \cite{BPR}   & $[9,6,4]$ & $[10,6,5]^*$ & $9$ & $10$\\

            & \emph{Duals of the row above} & $[9,3,7]$ & $[10,4,7]^*$ & $6$ & $10$\\

        $\F_{19}$ & $(\ast)$-twisted \cite{BPR}  & $[10,3,8]$ & $[14,3,12]^*$ & $6$ & $6$\\

        $\F_{19}$ & CTRS \cite{LLWZ} & $[10,5,6]$ & $[11,5,7]^*$ & $10$ & $10$\\

        $\F_{23}$ & $(\ast)$-twisted \cite{BPR}  & $[12,3,10]$ & $[14,3,12]^*$ & $6$ & $6$\\

        $\F_{27}$ & $(\ast)$-twisted \cite{BPR}  & $[14,3,12]$ & $[16,3,14]^*$ & $6$ & $6$\\

        $\F_{29}$ & Cyclic \cite{XT}   & $[7,3,5]$ & $[16,3,14]^*$ & $6$ & $6$\\

        $\F_{29}$ & $(\ast)$-twisted \cite{BPR}   & $[15,3,13]$ & $[16,3,14]^*$ & $6$ & $6$\\

        $\F_{32}$ & $(+)$-twisted \cite{BPR}   & $[17,3,15]$ & $[19,3,17]$ & $6$ & $6$\\
        \bottomrule
    \end{tabular}
    \small
    \parbox{\linewidth}{\centering Here $*$ indicates the code is not monomial equivalent to any RS code with the same parameters.}
\end{table}

\section{Conclusion and further study}

\qquad In this paper, we consider  extension of an MDS code. For a new column to be added, it should not lie in any hyperplane generated by $k-1$ columns of the current generator matrix. We first find the normal vectors of these hyperplanes. Then, we can check whether a new column is admissible by using inner products. In this way, the extension can be carried out step by step to obtain longer MDS codes. If the initial code is non-RS, the codes obtained by the extension are also non-RS.

The examples show that this procedure can be used for different codes over small finite fields. It can not only increase the length of the code, but also increase the dimension of its Schur square. Furthermore, some of the resulting codes have Schur squares of dimension strictly greater than $2k$, exceeding the dimension $2k$ attained by many existing constructions. These examples show that the procedure works in practice. Moreover, many new codes are non-RS and  have lengths larger than known non-RS MDS codes. In this sense, these codes are new. 

One limitation of the iterative procedure is that the choice of a new column at each step may affect the next extension. If we check all projective candidates and find that none of them can be added, then the current code cannot be extended by one more coordinate. However, this does not mean that the extension sequence we chose gives the longest code starting from the original code. In future work, we may use automorphisms, some algebraic properties, and backtracking to try different extension sequences and reduce the searching  space. It is also useful to compare the running time with other search methods to better understand the advantages and limitations of the proposed procedure.

\end{document}